\documentclass[12pt]{article}

\usepackage{amssymb}
\usepackage{amsmath}
\usepackage{amsthm}
\usepackage{mathrsfs}
\usepackage{MnSymbol}
\usepackage{cite}
\usepackage{graphicx}
\usepackage{appendix}
\usepackage{enumerate}
\usepackage{float}

\usepackage{calligra}
\DeclareMathAlphabet{\mathcalligra}{T1}{calligra}{m}{n}
\DeclareFontShape{T1}{calligra}{m}{n}{<->s*[2.2]callig15}{}

\usepackage[cal=boondox]{mathalfa}

\newcommand{\R}{\ensuremath{\mathbb{R}}}

\newcommand{\RN}{\ensuremath{\mathbb{R}^N}}
\newcommand{\scrp}{\ensuremath{\mathcal{p}}}
\newcommand{\scrq}{\ensuremath{\mathcal{q}}}
\newcommand{\scrm}{\ensuremath{\mathcal{m}}}

\newcommand{\scrv}{\ensuremath{\mathcal{v}}}
\newcommand{\scrw}{\ensuremath{\mathcal{w}}}

\newcommand{\scrP}{\ensuremath{\mathscr{P}}}
\newcommand{\scrC}{\ensuremath{\mathscr{C}}}
\newcommand{\scrPhat}{\ensuremath{\hat{\mathscr{P}}}}
\newcommand{\scrChat}{\ensuremath{\hat{\mathscr{C}}}}

\newcommand{\scrV}{\ensuremath{\mathscr{V}}}
\newcommand{\scrM}{\ensuremath{\mathscr{M}}}

\newcommand{\supp}{\ensuremath{\mathrm{supp} \,}}

\newcommand{\Cone}{\ensuremath{\mathrm{Cone} \,}}

\newcommand{\theory}{\ensuremath{(\Sigma,\scrP)}}
\newcommand{\deltam}{\ensuremath{\Delta\scrm}}
\newcommand{\process}{\ensuremath{(\Delta\scrm,\scrq)}}
\newcommand{\MSigma}{\ensuremath{\scrM(\Sigma)}}
\newcommand{\MSigmaZ}{\ensuremath{\scrM^{\circ}(\Sigma)}}
\newcommand{\MSigmaPl}{\ensuremath{\scrM_{+}(\Sigma)}}
\newcommand{\MSigmaPlOne}{\ensuremath{\scrM_{+}^1(\Sigma)}}
\newcommand{\VSigma}{\ensuremath{\scrV(\Sigma)}}
\newcommand{\etao}{\ensuremath{\eta^{\circ}}}
\newcommand{\To}{\ensuremath{T^{\circ}}}

\newtheorem{theorem}{\bf Theorem}[section]

\newtheorem{corollary}{\bf Corollary}[section]

\theoremstyle{remark}
\newtheorem{rem}[theorem]{Remark}
\newtheorem{example}[theorem]{Example}

\begin{document}

\title{How the Hahn-Banach Theorem Sheds Bright Light on Fundamental Questions in Classical Thermodynamics}

\author{Martin Feinberg\thanks{The William G. Lowrie Department of Chemical \& Biomolecular Engineering and Department of Mathematics, The Ohio State University, 151 W. Woodruff Avenue, Columbus, OH 43210 USA.  E-mail: feinberg.14@osu.edu.}
\and Richard B. Lavine\thanks{Department of Mathematics, University of Rochester, Rochester, NY 14627 USA.  Email: rdlavine@frontiernet.net}}






\maketitle

\begin{abstract}
The Hahn-Banach Theorem, a cornerstone of modern functional analysis,  is a natural companion of the Second Law of Thermodynamics. From a Kelvin-Planck version of the Second Law, the Hahn-Banach Theorem delivers, immediately and simultaneously, entropy and thermodynamic-temperature functions of the local material state such that the Clausius-Duhem inequality is satisfied for every process a particular material might admit. For \emph{existence} of such functions there is no need at all to require that their domain be restricted to states of equilibrium. However, the Hahn-Banach Theorem also indicates that for \emph{uniqueness} of such a pair of functions across the entire state-space domain, every state must be visited by a reversible process. This review is intended to help make accessible to both thermodynamics scholars and  mathematicians the remarkable interplay of the Hahn-Banach Theorem and the Second Law.
\end{abstract}
\newpage


\section{Introduction}
 
The last of the iconic 19\textsuperscript{th} century papers on foundations of classical thermodynamics, Gibbs's \emph{On the Equilibrium of Heterogeneous Substances} \cite{gibbs_ehs}, appeared in 1875. That was the year in which the great mathematician Henri Lebesgue was born. Without Lebesgue, functional analysis --- a core subject of modern mathematics --- would not have evolved to what it is today. 

By an accident of timing, then, the brilliant thermodynamics pioneers --- Carnot \cite{carnot1824}, Clapeyron \cite{clapeyron1834memoir}, Clausius \cite{clausius1854veranderte,ClausiusBook1867}, Kelvin \cite{thomson1853xv}, and Gibbs --- did not have available to them the powerful tools that modern functional analysis provides, or even the language required for precise mathematical expression of their ideas. In many ways, classical thermodynamics remains firmly  rooted in the 19\textsuperscript{th} century, with both teaching and research largely divorced from powerful benefits readily derived from the framework and theorems of 20\textsuperscript{th} century functional analysis.\footnote{Standard presentations of classical thermodynamics are well represented in 20\textsuperscript{th} century textbooks by Fermi \cite{fermithermodynamics}, Pippard \cite{pippard1964elements} and Denbigh \cite{denbigh1981principles}. More technical versions of the Hahn-Banach Theorem do play (less starring) roles in \cite{giles2016mathematical} and, much later, in \cite{lieb1999physics}.} In the 21\textsuperscript{st} century, this shouldn't be the case.

It is the purpose of this article to make more accessible, to thermodynamics scholars and to mathematicians\footnote{There is a growing influence of thermodynamics on mathematics. See, for example, surveys of entropy ideas in partial differential equations \cite{evans2004entropy,ball2013entropy}.}, salient results \cite{feinberg1983thermodynamics,feinberg1986foundations,feinberglavine2024entropy1,feinberglavine2024entropy2} dating back to the 1980s  at the juncture of the two subjects. That work indicates how modern functional analysis, via the Hahn-Banach Theorem (explained below), delivers \emph{simultaneously} both entropy and thermodynamic temperature as \emph{immediate} consequences of the Second Law.  Because emergence of entropy and thermodynamic temperature, as functions of state, becomes disentangled from Carnot-cycle near-equilibrium arguments of the pioneers, the Hahn-Banach viewpoint provides means to examine the 19\textsuperscript{th} century arguments critically and to shed light on questions arising from them --- \emph{in particular about the extent to which domains of the entropy and thermodynamic temperature functions are necessarily restricted to equilibrium (or near-equilibrium) states}.

\subsection{The Legacy of the Thermodynamics Pioneers}\label{subsec:Legacy} The 19\textsuperscript{th} century pioneers did what they could with what they had, and they did it brilliantly. Beginning with one or another (necessarily vague) word statement of the Second Law --- a stricture on the nature of net heat transfer to  a body experiencing a cyclic process --- they deduced from ingenious arguments, involving coupled, highly idealized engines and refrigerators, the existence of two functions of a material's state --- an \emph{entropy function} and a \emph{thermodynamic temperature function} --- that, taken together, satisfy a large family of \emph{linear integral inequalities} constraining those two functions.

In particular, from the Second Law the pioneers surmised the existence of entropy and thermodynamic temperature functions of the material state that, \emph{for each process}, satisfy what has come to be called the \emph{Clausius-Duhem inequality}. In the very first paragraph of  \emph{Equilibrium of Heterogeneous Substances}, Gibbs expressed that requirement in the following way: For each process,

\begin{equation}
\label{eq:CDInGibbs}
\begin{bmatrix}
\textrm{The total entropy}\\  
\textrm{of the body at the}\\  
\textrm{end of the process}
\end{bmatrix}
\,-\,
\begin{bmatrix}
\textrm{The total entropy of the}\\   
\textrm{body at the beginning}\\  
 \textrm{of the process}
\end{bmatrix}
\quad\geq\quad
{\int\frac{dq}{T}}\ \biggr\rvert_{\;\textrm{process}},
\end{equation}

\noindent
\emph{``dq denoting the element of heat received from external sources and $T$ denoting the temperature of the part of the system receiving it."} (This interpretation of the right side of \eqref{eq:CDInGibbs} is from the same Gibbs paragraph.) 

That each term on the left side of \eqref{eq:CDInGibbs} is also an integral is made clear in a passage from an earlier article by Gibbs \cite{gibbs_geom}. There he states clearly that, even for an \emph{unequilibrated} body, 
\emph{``The body, however, as a whole has a certain volume, entropy, and energy which are equal to the sums of the volumes, etc., of its parts."} This suggests that, if local entropy values can be calculated from a function of the local material state, then the entropy of a body as a whole can be calculated, at each instant, from an integration of the local entropy function over the entire body, taking account of the distribution within the body of the local material states. Note that such an integral is \emph{linear} in the local entropy \emph{function}, just as the right side of \eqref{eq:CDInGibbs} is \emph{linear} in the reciprocal of the local thermodynamic temperature \emph{function}. 

In a sense, then, the infinite set of constraints given by \eqref{eq:CDInGibbs} on the entropy and the reciprocal-temperature functions --- that set consisting of one such inequality for each process --- resembles the usually finite set of linear constraints in a typical finite-variable linear programming optimization problem. In such a problem, the existence of a solution satisfying the constraints, let alone an optimal one, is by itself a question to be considered at the outset. In fact, though, this is the kind of question the thermodynamics pioneers were grappling with, but in a far more difficult (infinite-dimensional) setting.

Today, questions of the existence of one or more functions satisfying powerful constraints, notably in the form of a very large (perhaps infinite) collection of linear integral inequalities, are most likely to be resolved with current tools of modern functional analysis, in particular the Hahn-Banach Theorem \cite{brezis2011functional,	choquet1969lectures,rudin_functional_1991,simon2011convexity}.\footnote{An account \cite{narici1997hahn} of the history of the Hahn-Banach Theorem indicates how, even in the early 20\textsuperscript{th} century, giants of mathematics were struggling to make now-standard ideas precise.}

\subsection{Motivation and Questions Considered}\label{subsec:Motivation-Questions} Although this article is about foundations of classcial thermodynamics generally, it is motivated in part by questions connected to use of the Clausius-Duhem inequality in modern continuum thermomechanics. A 1963 paper by Coleman and Noll \cite{coleman1963thermodynamics}, largely about the thermodynamics of viscous fluids, inspired a substantial body of work on the thermodynamics of a variety of materials subject to rapid deformations and to  heat transfer accompanied by sharp variations of temperature. A good example is a paper by Bowen \cite{bowen1968thermochemistry} on the thermochemistry of reacting mixtures. For a simpler paper with similar methodology see \cite{coleman-mizel1964}.

In all of this,  the Clausius-Duhem inequality was itself invoked as a
\emph{statement} of the Second Law, with a \emph{local} entropy density (entropy per mass) and a \emph{local} thermodynamic temperature within a material body taken as primitive entities. Moreover, it was generally presumed among other things that, for the material under consideration, there is a specific-entropy \emph{function} that assigns the local specific entropy to the local material state, with that state described by a list of local attributes particular to the material under consideration. 

Such presumptions, however,  might be regarded to be an unwarranted extension to nonequilibrium settings of ideas put forward by the  mid-19\textsuperscript{th} century pioneers. Indeed, despite the Gibbs remarks quoted earlier, consensus thinking even now is reflected in a response to a query put to Google Gemini in January, 2026:
\begin{quote}
\textit{Query:} In classical thermodynamics, is entropy defined only on equilibrium states? If so, why? 

\textit{Response (Edited for brevity):} In classical thermodynamics, the answer is yes: entropy is strictly defined only for systems in thermodynamic equilibrium. To calculate the entropy of a state, you must be able to imagine a reversible path from a reference state to that state. By definition, a reversible process is one that moves through a continuous sequence of equilibrium states. If a system is far from equilibrium, there is no reversible path that can ``land" on that specific chaotic state, making the classical calculation impossible.
\end{quote}

With this in mind, we consider two questions that might be posed for thermodynamic theories of particular material classes (e.g., elastic solids, viscous fluids, mixtures with chemical reactions).
\begin{enumerate}[(i)]
\item  Does the classical Second Law itself (in particular, a  version of the Kelvin-Planck statement) ensure the \emph{existence} of two functions of the \emph{local} (perhaps nonequilibrium) material state --- a specific entropy function and a thermodynamic temperature function --- such that, with respect to those local state functions, the Clausius-Duhem inequality is satisfied for all processes the material admits?

\item When such entropy and thermodynamic-temperature functions do exist, under what circumstances is \emph{uniqueness} of those functions ensured, up to inconsequential changes of scale?
\end{enumerate}

\begin{rem}[Vocabulary: \emph{State} and \emph{Condition}] Throughout this article we adopt the viewpoint of continuum mechanics, in which the idea of a material point is fundamental, and reference is freely made to the density, velocity, stress tensor, temperature, or species concentrations at a material point. We will always regard a \emph{state} as an attribute of a material point within a body,
not of the body as a whole. For the body as a whole we will refer to its \emph{condition}.

\end{rem}

\section{Preview: The Hahn-Banach Theorem and the Second Law} \label{sec:Preview}

It is the purpose of this section to state the version of the Hahn-Banach Theorem that we shall invoke and to suggest just why we might expect the theorem to ensure naturally, as a consequence of the Second Law, the existence of entropy and thermodynamic temperature state functions suited to the Clausius-Duhem inequality.

\subsection{A Version of the Hahn-Banach Theorem} The version of the Hahn-Banach Theorem of interest here has a highly geometric and very intuitive flavor. In a three-dimensional vector space such as $\R^3$, the theorem asserts that if $A$ and $B$ are any two disjoint closed convex\footnote{Recall that a convex set in a real vector space is a set such that, for any two members, the line segment joining them is also in the set. Recall also that a set is closed if it contains its boundary.} sets, with $B$ bounded, then $A$ and $B$ can be separated by a plane, with $A$ lying on one side of the plane and $B$ lying  on the other side strictly (i.e., with no member of $B$ lying in the plane). See Figure \ref{fig:HBIllustraion}.

\begin{figure}[H]
\centering
\includegraphics[width=0.35\textwidth]{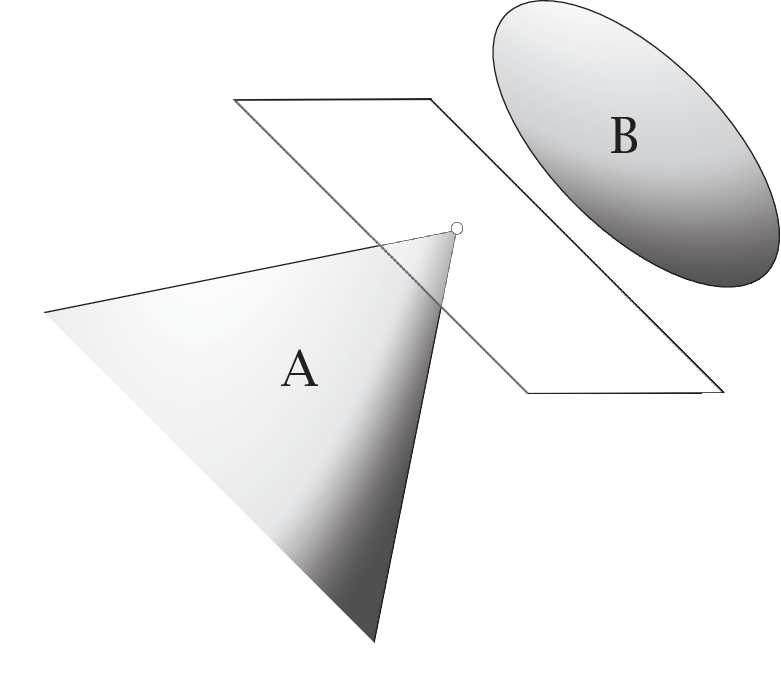}
\caption{Hahn-Banach Theorem illustration}
\label{fig:HBIllustraion}
\end{figure}

In algebraic terms, a particular plane in $\R^3$ can be identified with the set of all solutions of an equation $f(x) = \gamma$, where $f(\cdot)$ is a linear real-valued function on $\R^3$ and  $\gamma$ is a number; the pair $f(\cdot)$ and $\gamma$ then characterize the plane. In effect, then, the Hahn-Banach Theorem asserts the existence of a linear function $f(\cdot)$  and a number $\gamma$ such that 
\begin{equation}\label{eq:separation}
f(a) \leq \gamma, \,\, \textrm{for all}\,\,  a\, \textrm{in}\, A \quad \textrm{and} \quad f(b) > \gamma, \,\, \textrm{for all}\,\, b\,\,  \textrm{in}\,  B.
\end{equation}
At least in the three-dimensional case, this seems geometrically obvious and perhaps even easy to prove, until it is remembered that the argument must apply to \emph{any} two convex sets conforming to the stipulated conditions. 

It is in infinite-dimensional vector spaces that the Hahn-Banach Theorem, suitably formulated, achieves its full power: If a real vector space $V$, perhaps infinite-dimensional, has certain technical qualities in common with $\R^3$, then the assertion described in $\R^3$ remains true, provided  that the \emph{closed and bounded} requirement on $B$ is replaced by the closely related requirement that $B$ be \emph{compact} \cite{rudin_real_1987}. 

In the statement of the Hahn-Banach Theorem  just below, the stipulated mild technical conditions on the vector space $V$ are necessary to make the statement  correct. For readers unfamiliar with the terminology, it is enough to suppose that $V$ is ``ordinary."

\begin{samepage}
\begin{theorem}{\emph{\textbf{(Hahn-Banach Theorem)}}}\label{thm:HahnBanach} Let $V$ be a Hausdorff locally convex topological vector space,  and let $A$ and $B$ be non-empty disjoint closed convex subsets of $V$, with $B$ compact. There is a  continuous real-valued linear function $f(\cdot)$ on V and a number $\gamma$ such that \eqref{eq:separation} obtains.
\end{theorem}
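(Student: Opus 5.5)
The plan is to reduce the statement to the separation of an open convex set from a point, and to obtain that separation from the analytic (dominated-extension) form of Hahn-Banach via a Minkowski gauge.

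\emph{Step 1: open up a neighborhood of $0$.} Since $A$ is closed and $B\subset V\setminus A$, every $b\in B$ has a neighborhood of the form $b+U_b$ disjoint from $A$. Because $V$ is a locally convex topological vector space, we may take $U_b$ convex and balanced. Continuity of addition gives a convex balanced neighborhood $W_b$ with $W_b+W_b\subset U_b$. The sets $b+W_b$ cover the compact set $B$, so finitely many $b_i+W_{b_i}$ suffice. Put $W=\bigcap_i W_{b_i}$, which is again a convex balanced neighborhood of $0$. Then $(A+W)\cap(B+W)=\emptyset$. Indeed, if $a+w=b+w'$ with $b\in b_i+W_{b_i}$, then $a\in b_i+W_{b_i}+W+W\subset b_i+U_{b_i}$ (using that $W$ is balanced), which is a contradiction. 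Both $A+W$ and $B+W$ are open and convex. Set $C=(A+W)-(B+W)$, an open convex set not containing $0$.

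\emph{Step 2: separate $C$ from $0$.} Fix $c_0\in C$ and put $D=c_0-C$. This is an open convex neighborhood of $0$ with $c_0\notin D$. Let $p$ be its Minkowski gauge, $p(x)=\inf\{t>0: x\in tD\}$. It is sublinear and finite because $D$ is absorbing, it satisfies $p(x)\le 1$ on $D$, and $p(c_0)\ge1$. Define $g(tc_0)=t$ on the line $\R c_0$; then $g\le p$ there. The Hahn-Banach extension theorem (Zorn's lemma argument) extends $g$ to a linear $F$ on $V$ with $F\le p$. Continuity follows from $|F|\le 1$ on the neighborhood $D\cap(-D)$. For $c\in C$ we have $F(c_0-c)\le p(c_0-c)<1$, because $D$ is open and so the gauge is strictly less than $1$ on $D$. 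Hence $F(c)>0$ on $C$.

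\emph{Step 3: recover the separation.} Set $f=-F$. Then $f(a)<f(b)$ for all $a\in A$, $b\in B$. The image $f(B)$ is compact, so $\beta=\min_B f$ is attained. It remains to get a strict gap. Choose $w_0\in W$ with $F(w_0)\neq0$; this is possible because $F\neq0$ and $W$ is absorbing, and after replacing $w_0$ by $-w_0$ we may assume $f(w_0)>0$. For $a\in A$, $b\in B$ we have $a+w_0\in A+W$ and $b-w_0\in B+W$ (using that $W$ is balanced). Applying $F>0$ on $C$ gives $f(a)+\delta< f(b)-\delta$, where $\delta=f(w_0)>0$. Therefore $\sup_A f\le\beta-2\delta<\beta$. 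Take $\gamma=\sup_A f$, which is finite since $A\neq\emptyset$. This gives \eqref{eq:separation}.

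The Hausdorff hypothesis is not actually needed for this argument. The main obstacle is Step 2: the extension theorem itself, which requires Zorn's lemma and the one-dimensional extension step. That extension step is where the real content of Hahn-Banach lies. I would invoke it as standard, e.g.\ from \cite{rudin_functional_1991}, rather than reprove it. The compactness argument in Step 1 is the only other delicate point.
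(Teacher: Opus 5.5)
The paper does not prove this theorem; it quotes it as a standard result, citing \cite{rudin_functional_1991} and others. Your three-step route is the textbook proof (Theorems 1.10 and 3.4 of \cite{rudin_functional_1991}): shrink to disjoint open convex enlargements by compactness, separate the open convex set $C$ from $0$ with a Minkowski gauge and the dominated-extension theorem, then extract a strict gap. Steps 2 and 3 are correct as written, and you are right that the Hausdorff hypothesis is never used.

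Step 1, however, contains an error that makes its conclusion false as stated. From $a=b+w'-w$ and $b\in b_i+W_{b_i}$ you get $a\in b_i+W_{b_i}+W+W$. Since you only know $W\subset W_{b_i}$, the inclusion $b_i+W_{b_i}+W+W\subset b_i+U_{b_i}$ would need the three-fold sum $W_{b_i}+W_{b_i}+W_{b_i}\subset U_{b_i}$. You arranged only the two-fold sum, and balancedness only handles the sign of $-w$.

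The disjointness really does fail with your choices. In $V=\R$ take $A=[1,\infty)$ and $B=[0,0.49]$. The choices $U_0=(-1,1)$ and $W_0=(-\tfrac12,\tfrac12)$ are legitimate, and the single set $0+W_0$ already covers $B$, so $W=W_0$. Yet $0.51=1-0.49=0.49+0.02$ lies in $(A+W)\cap(B+W)$. Hence $0\in C$, and Step 2 has nothing to separate.

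The repair is routine:
\begin{itemize}
\item choose $W_b$ with $W_b+W_b+W_b\subset U_b$, for instance $W_b=\tfrac13 U_b$ using convexity of $U_b$ (this is the choice made in Theorem 1.10 of \cite{rudin_functional_1991}); or
\item keep your $W_b$ and set $W=\tfrac12\bigcap_i W_{b_i}$, so that $W+W\subset W_{b_i}$ by convexity.
\end{itemize}
With either change the rest of your argument goes through unchanged.

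Your $w_0$-shift in Step 3 is a more explicit substitute for the usual observation that $f(B)$ is a compact subset of the open interval $f(B+W)$, which lies entirely to the right of $f(A+W)$. Both give the strict gap.
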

\end{samepage}

Before turning to a discussion of connections of the Hahn-Banach Theorem to the Second Law, it will be useful to state an easy corollary. We say that $A$ is a \emph{cone} if, for every vector $a$ in $A$ and every positive number $\alpha$, the vector $\alpha a$ is also in $A$.

\begin{corollary} \label{cor:HBCone} If, in Theorem \ref{thm:HahnBanach}, $A$ is a closed convex cone, then $\gamma$ can be taken to be $0$. That is, there is a continuous real-valued linear function $f(\cdot)$ on V such that
\begin{equation}\label{eq:separation0}
f(a) \leq 0, \,\, \textrm{for all}\,\,  a\, \textrm{in}\, A \quad \textrm{and} \quad f(b) > 0, \,\, \textrm{for all}\,\, b\,\,  \textrm{in}\,  B.
\end{equation}
\end{corollary}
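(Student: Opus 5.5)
Apply Theorem \ref{thm:HahnBanach} directly to the closed convex cone $A$ and the compact convex set $B$. This produces a continuous linear $f(\cdot)$ and a number $\gamma$ with $f(a)\le\gamma$ for all $a$ in $A$ and $f(b)>\gamma$ for all $b$ in $B$. The plan is to keep the same $f(\cdot)$ and show that the cone structure of $A$ lets us replace $\gamma$ by $0$.

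First I will show that $f(a)\le 0$ for every $a$ in $A$. Suppose instead that $f(a_0)>0$ for some $a_0$ in $A$. Since $A$ is a cone, $\alpha a_0$ lies in $A$ for every $\alpha>0$, and by linearity $f(\alpha a_0)=\alpha f(a_0)$. Letting $\alpha\to\infty$ makes this exceed $\gamma$, which contradicts $f\le\gamma$ on $A$. So $f(a)\le 0$ on $A$.

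Next I will show that $\gamma\ge 0$, which gives $f(b)>\gamma\ge 0$ on $B$. Because $A$ is non-empty, pick any $a$ in $A$. Then $\alpha a$ lies in $A$ for all $\alpha>0$, so $\alpha f(a)\le\gamma$ for every $\alpha>0$. Letting $\alpha\to 0^+$ gives $0\le\gamma$. Equivalently, since $A$ is closed, $0=\lim_{\alpha\to0^+}\alpha a$ belongs to $A$, and so $f(0)=0\le\gamma$. Either route works and needs only the non-emptiness of $A$ from the hypotheses of Theorem \ref{thm:HahnBanach}.

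Combining the two steps yields \eqref{eq:separation0}. I see no real obstacle here. The only point needing care is to remember that $\gamma\ge0$ must be checked before concluding the strict inequality $f(b)>0$ on $B$. That check uses the fact that $A$ is non-empty.
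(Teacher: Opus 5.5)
Your proof is correct. Scaling by $\alpha\to\infty$ forces $f\le 0$ on $A$, and letting $\alpha\to 0^+$ (or using closedness to get $0\in A$) gives $\gamma\ge 0$, so $f(b)>\gamma\ge 0$ on $B$. The paper calls this corollary ``easy'' and gives no proof; your argument is the standard one that supplies it.
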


\subsection{The Hahn-Banach Theorem as a Natural Companion of the Second Law}  With this as background, we can begin to see a sense in which the Second Law of Thermodynamics provides natural and fertile grounds for application of the Hahn-Banach Theorem. Here, in this introduction, we can only describe in vaguely suggestive terms how this connection comes about. More detail will follow.

Suppose that, for bodies composed of a particular material, all conceivable processes  --- actual or imagined, whether Second-Law-compliant or not ---  can be somehow associated with members of a vector space $V$ of the kind described in the theorem statement. Let $P$ denote the set of vectors corresponding to all \emph{actual} processes the material admits. In this introductory section we will assume that, for each process vector in $P$, any positive multiple of it is also a member of $P$, perhaps corresponding to a process that is a scaled copy of the original process. That is, we are assuming in this section that $P$ is a cone. Furthermore, let $X$ denote the set of all non-zero vectors in $V$ that, by virtue of the Second Law, could not represent actual processes. It is a tautology, then, that $P$ and $X$ are disjoint. 

Note, however, that classical thermodynamics carries with it an implicit assumption: No hypothetical process (such as a Carnot cycle) that is not itself an actual process but which can be approximated arbitrarily closely by actual processes, can violate the Second Law. Stated differently, there is the implicit assumption that the actual processes must not only satisfy the Second Law, \emph{they must not come arbitrarily close to violating it}. In the limited context of this section, \emph{the Second Law will be understood to be the assertion that the \emph{closure of $P$} (denoted $cl(P)$) --- i.e., $P$ taken with its boundary ---  is disjoint from $X$  (and, therefore, from any compact convex subset of $X$, say $X^{\ast}$}).

Now if it can be argued that $cl(P)$ is a closed \emph{convex} cone --- i.e., that the sum of any two members of $cl(P)$ is also a member --- then the Hahn-Banach Theorem would assert that there is a linear continuous real-valued function $f(\cdot)$ on $V$  such that
 
\begin{equation}\label{eq:separation2}
f(p) \leq 0, \,\, \textrm{for all}\,\,  p\; \textrm{in}\, cl(P) \quad \textrm{and} \quad f(x) > 0 \,\, \textrm{for all}\,\, x\,\,  \textrm{in}\,  X^{\ast}.
\end{equation}

Note that the first of these inequalities asserts that t\emph{here exists a linear function $f(\cdot)$ on $V$ such that the inequality  $f(p) \leq 0$ is satisfied by every vector $p$ corresponding to an actual process.} This already has something of the flavor of the assertion, described in $\S$\ref{subsec:Legacy}, that \emph{there is an integral Clausius-Duhem inequality, linear in entropy and reciprocal temperature functions of the local material state, that is satisfied by all actual processes (and their limits) that the material under study admits}.

This impressionistic preview of course needs to be fleshed out considerably, but it offers a hint of how the Hahn-Banach Theorem can provide a connection between, on one hand, the Second Law and, on the other hand, the existence of entropy and thermodynamic temperature functions of state suited to the Clausius-Duhem inequality: The Second Law first provides the disjunction between the set of actual processes from the set of hypothetical forbidden ones.  Then, in a vector space representation,  the Hahn-Banach Theorem (taken with some mild physical suppositions)  ensures the existence of a linear inequality separating one set from the other.

	In the next section we will begin to consider how, in a highly simple cartoon setting, processes can be given vector space representation and how the Hahn-Banach Theorem already provides considerable thermodynamic and geometric  insight, derived from easily-depicted two-dimensional diagrams. The lessons taught by the example will provide useful guidance for the understanding of the more general and more mathematical sections that follow.
 
\section{An Instructive Cartoon Example} \label{sec:cartoon}

This section is  intended to be only instructive. It is highly restricted in two ways. First, the focus will be entirely on the Clausius-Duhem inequality \emph{restricted to cyclic processes}. Second, we will limit our considerations to the thermodynamics of a thoroughly unrealistic toy substance in which material points reveal themselves in only two possible states (as distinct from, say, a diffusive reacting mixture, for which there are an infinite number of possible local states, characterized in part by the varying local mixture composition). For such a toy substance, we will be able to draw motivating pictures in the familiar two-dimensional vector space $\R^2$.

\subsection{The Clausius-Duhem Inequality Restricted to Cyclic Processes} Later on, and far more generally, we will show how, from a statement of the Second Law, the Hahn-Banach Theorem ensures the existence of entropy and thermodynamic temperature functions of the local material state such that the resulting Clausius-Duhem inequality is satisfied for \emph{every} process  a particular material might admit. In this expository section our aims will be more modest, in consideration of the weaker requirement that the resulting Clausius-Duhem inequality be satisfied only by all cyclic processes. 

	By a \emph{cyclic process}, we mean in this section a process  in which the condition of the body experiencing the process is the same at the end of the process as it was at the beginning. Because, for a cyclic process, the total entropy of the body experiencing the process is the same at the process's beginning and end, the  Gibbs formulation \eqref{eq:CDInGibbs} of the Clausius-Duhem requirement takes a restricted form: For every \emph{cyclic} process,

\begin{equation}
\label{eq:CDInGibbsCyclic}
0
\quad\geq\quad
{\int\frac{dq}{T}}\ \biggr\rvert_{\;\textrm{process}},
\end{equation}
\noindent 
\emph{``dq denoting the element of heat received from external sources and $T$ denoting the temperature of the part of the system receiving it."}

For historical reasons, the Clausius-Duhem inequality requirement, restricted to cyclic processes, is usually (and confusingly) called  the \emph{Clausius inequality} requirement. Note that in \eqref{eq:CDInGibbsCyclic} there is no mention of entropy. For that reason, our limited goal in this section will be to consider how, from a statement of the Second Law, the Hahn-Banach Theorem gives rise  to a thermodynamic temperature function of state $T(\cdot)$  such that \eqref{eq:CDInGibbsCyclic} is satisfied for every cyclic process. This we call a Clausius temperature scale.\footnote{In \cite{feinberg1983thermodynamics} we invoked the Hahn-Banach theorem to examine the existence, uniqueness, and properties of a Clausius temperature scale, as consequences of the Kelvin-Planck Second Law, with that function having as its domain a general state space similar to the one in \S\ref{subsect:StateSpace}. Working independently, Miroslav Šilhavý \cite{silhavy1980measures} too invoked the Hahn-Banach theorem to derive the existence of a Clausius temperature scale, but one having as its domain a presumed empirical temperature scale. In the same empirical temperature scale setting, we considered existence and uniqueness of a Clausius temperature scale, via the Hahn-Banach Theorem, in a private communication to James Serrin \cite{feinberg-lavine_PrelimNotes}.}

\subsection{A Toy Two-State Substance} \label{subsec:ToyTwo-State} As indicated, we confine our attention in this section to the thermodynamics of a hypothetical toy substance such that, in any body composed of the substance, a material point can be in only one of two local states, labeled $1$ and $2$. \emph{We do not insist that, at a given instant, all material points within the body are in the same state, and we allow for the possibility that, as time evolves,  the various material points can switch from one state to the other. }

\subsubsection{Heating vectors} With each process experienced by a body composed of the substance under consideration we associate a \emph{heating vector}, $\scrq := [\scrq_1,\scrq_2]\  \textrm{in}\  \R^2$, having the following interpretation: For $i =1,2$, $\scrq_i$ 
is the net amount of heat (in, say, calories) received \emph{from the exterior of the body} during the course of the \emph{entire} process by material which, \emph{at the time of heat receipt}, is in state $i$. In colloquial terms, to determine $\scrq_i$ we might observe the process through lenses that show only material in state $i$ and record at the end of the process the net amount of heat --- positive, negative, or zero ---  received by \emph{visible} material from the exterior of the full body experiencing the process. (What is visible might change from moment to moment.) Note that $\scrq_1 + \scrq_2$ is the net amount of heat received by the body over the course of the entire process from the body's exterior.

The First Law of Thermodynamics requires that, for a cyclic process, the net work done by the body experiencing the process on its exterior is equal to the net heat receipt by the body from its exterior. Thus, if  $\scrq$ is the heating vector for a cyclic process experienced by a body of our two-state substance, \emph{the net amount of work done by the body is  $\scrq_1 + \scrq_2$}.

\subsubsection{Tentative statement of the Kelvin-Planck Second Law, applied to a two-state substance} In terms that will be made more precise later on, the Kelvin-Planck version of the Second Law requires that, \emph{in a cyclic process, the body experiencing the process cannot merely absorb heat from its exterior; there must be some (qualitatively different) heat emission as well}. In this way, the Kelvin-Planck Second Law denies the possibility of a perfectly efficient cyclic engine that converts all heat absorbed from the exterior entirely into work.

 Returning to our two-state substance, we denote by $\scrC$  the set in $\R^2$ consisting of all \emph{cyclic heating vectors} --- that is, the set of all heating vectors corresponding to cyclic processes that bodies composed of our toy material actually admit. If $\scrq := [\scrq_1,\scrq_2]$ is a  nonzero member of $\scrC$, then the Kelvin-Planck Second Law requires that $\scrq_1$ and $\scrq_2$ cannot both be non-negative; if one is positive, then the other must be negative. If we denote by $\R^{\,2}_{\geq 0}$ the set of all vectors of $\R^2$ having only non-negative components, then, for our toy substance, the Kelvin-Plank Second Law requires that $\scrC$ meets  $\R^{\,2}_{\geq 0}$ at most in the zero vector of $\R^2$. In fact, the Second Law carries the implicit requirement that vectors of $\scrC$ not come arbitrarily close to a fixed non-zero vector of $\R^{\,2}_{\geq 0}$. This is expressed in the form

\begin{equation}\label{eq:TentSecondLaw}
cl\,(\scrC)\, \cap \, \R^{\,2}_{\geq 0}\  \textrm{is at most}\, [0,0],
\end{equation}
where $\cap$ indicates intersection and again $cl\,(\cdot)$ indicates the \emph{closure} of a set --- i.e., the set taken with its boundary. In a subtle way, however, this falls just a little short of ensuring that no cyclic process \emph{can come arbitrarily close to having perfect efficiency}, as we shall see in the next section.

\subsubsection{A slightly stronger Second Law: the denial of an approach to perfect efficiency} \label{subsubsec:EffCounteExamp}


It is not unreasonable to suppose that $\scrC$ is closed under multiplication by any positive number, not just the positive integers, in which case $\scrC$ would be a cone. For if $\scrq$ in $\scrC$ corresponds to a particular cyclic process and $\alpha$ is a positive number, then one can imagine a scaled copy of that process, of perhaps different duration and suffered by a body of different mass, giving rise to the cyclic heating vector $\alpha\scrq$.

If, however, $\scrC$ is not a cone, then \eqref{eq:TentSecondLaw} --- our tentative statement of the Second Law ---  is not sufficiently strong as to preclude the possibility that cyclic processes encoded by $\scrC$ come arbitrarily close to achieving perfect efficiency. This will be demonstrated with an example.

\begin{example}\label{ex:EffExample}
 Let $\scrC$ be the closed set in $\mathbb{R}^2$ defined by 
\begin{equation}
\scrC := \{\scrq\  \mathrm{in}\  \mathbb{R}^2:\,\scrq_1 < 0,\, \scrq_2>0,\, \scrq_1\scrq_2 \,\leq\  -1\}. \label{eq:EffExample}
\end{equation}
It is evident from Figure \ref{fig:Hyperbola}, in which are depicted a variety of heating vectors, that $\scrC$ satisfies condition \eqref{eq:TentSecondLaw}.

\begin{figure}[H]
\centering
\includegraphics[width=0.5\textwidth]{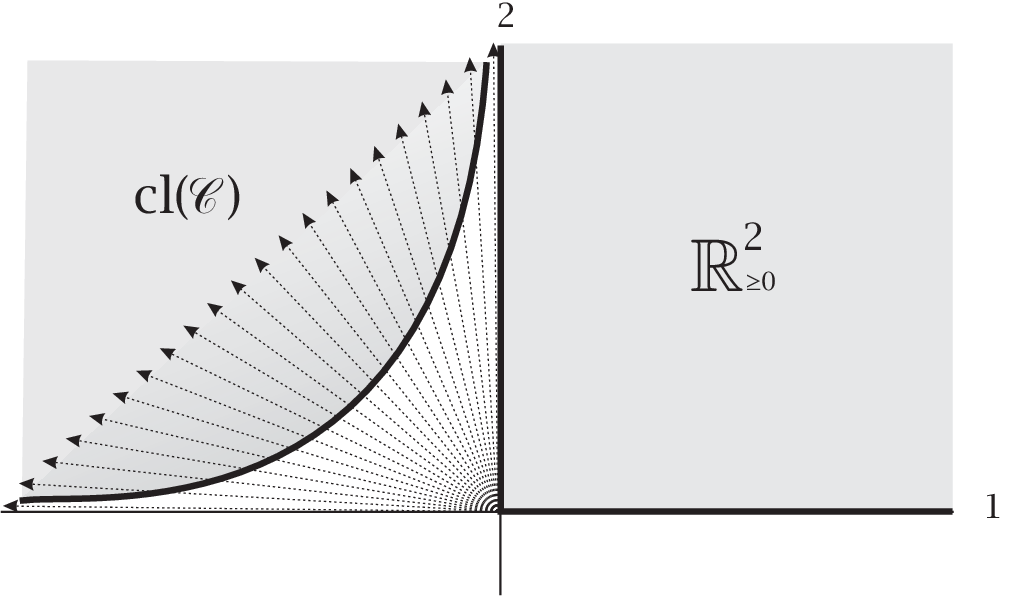}
\caption{An approach to perfect efficiency}
\label{fig:Hyperbola}
\end{figure}

Nevertheless, the processes represented in $\scrC$ come arbitrarily close to achieving perfect efficiency, which for our example is defined in the following way: If $\scrq$ is an element of \scrC, then the efficiency of the cyclic process associated with $\scrq$ is the work done by the body experiencing the process, $\scrq_1 + \scrq_2$, divided by the heat received by the body from its exterior, $\scrq_2$:
\begin{equation}\label{eq:EffDef}
\mathrm{eff}\,(\scrq) :=\  \frac{\scrq_1 + \scrq_2}{\scrq_2}.
\end{equation}
Because in our example $\scrq_1$ is always negative, no cyclic process has an efficiency of $1$.

	However, Figure \ref{fig:Hyperbola} indicates that, along the hyperbolic boundary of $\scrC$, \emph{the process efficiency comes arbitrarily close to 1 as $\scrq_1$ approaches 0}, this despite the fact that \eqref{eq:TentSecondLaw} is satisfied. In the example, the failure of \eqref{eq:TentSecondLaw} to prelude an approach to perfect efficiency resides in the fact that, although each nonzero element of  $\R^{\,2}_{\geq 0}$ is bounded away from $\scrC$, the members of $\scrC$ come arbitrarily close to \emph{aligning} with a nonzero member of $\mathbb{R}^2_{\geq 0},$, in particular the vector $[0,1]$. 
	
	This ends our example, which demonstrates that when $cl(\scrC)$ is not a cone, a Second Law of the form \eqref{eq:TentSecondLaw} can fall short of denying an approach to perfect efficiency.
\end{example}
\centerline{***}
\medskip

	From \eqref{eq:EffDef} it is evident that if $\scrq$ is a member of $\scrC$ and $\alpha \geq 1$, then $\textrm{eff}\,(\alpha\scrq) = \textrm{eff}\,(\scrq)$. This is to say that it is the \emph{direction} along which $\scrq$ points, not its magnitude, that determines its efficiency. To deny an approach to an efficiency of 1 by the set of cyclic processes, then, is to deny the possibility that members of $\scrC$ come arbitrarily close to being \emph{positive in direction} --- that is, to aligning with a nonzero member of $\mathbb{R}^2_{\geq 0}$.
	
	We can make this stricture precise in the following way. First we let $\Cone(\scrC)$ be the cone generated by $\scrC$ --- that is, the set of all vectors of $\mathbb{R}^2$ that point in the same direction as members of $\scrC$:
\begin{equation}
\Cone(\scrC)\, :=\, \{\alpha\scrq \in \mathbb{R}^2 : \alpha > 0, \scrq \in \scrC\}.
\end{equation}
Then, we require that no nonzero member of  $\mathbb{R}^2_{\geq 0}$ can be approximated arbitrarily closely by members of $\Cone(\scrC)$:

\begin{equation}\label{eq:TentSecondLaw2}
cl\,(\Cone(\scrC))\, \cap \, \R^{\,2}_{\geq 0}\,=\, \{[0,0]\}.
\end{equation}
Although $\scrC$ in Example \ref{ex:EffExample} satisfies condition \eqref{eq:TentSecondLaw}, it fails to satisfy the slightly stronger condition \eqref{eq:TentSecondLaw2}, which in turn denies the possibility of cyclic-process perfect efficiency.

	With this in mind, we will invoke condition \eqref{eq:TentSecondLaw2} as an expression of the Kelvin-Planck Second Law, applied to our cartoon two-state substance. Note that if $\scrC$ is itself a cone, there is no difference between conditions \eqref{eq:TentSecondLaw} and \eqref{eq:TentSecondLaw2}. With $\scrC$ denoting the set of cyclic heating vectors for a particular two-state substance, we hereafter employ the following abbreviation:
	
\begin{equation}
\scrChat := cl\,(\Cone(\scrC)).
\end{equation}

The Second Law takes then the form
\medskip

\noindent\emph{Second Law (Two-State Substance)}. \emph{For a two-state substance having cyclic heating vectors $\scrC \subset \mathbb{R}^2$, it must be the case that}
\begin{equation}\label{eq:TwoStateSecondLaw}
\scrChat\, \cap \, \R^{\,2}_{\geq 0}\,=\, \{[0,0]\}.
\end{equation}

The hypothetical set $\scrC$ given in \eqref{eq:EffExample} is not compliant with the Second Law as expressed by \eqref{eq:TwoStateSecondLaw} because $\scrChat$ and $\R^{\,2}_{\geq 0}$ have the vector $[0,1]$ in common.

\subsubsection{The convexity of $\scrChat$}

In general, the set of $\scrC$ of all cyclic heating vectors for our two-state substance can be expected to have an inherent structure, based on natural physical premises. For example, if $\scrq$ and $\scrq'$ in $\scrC$ correspond to two physical processes of the same duration, then $\scrq + \scrq'$ should also be a member of $\scrC$, for the two processes could be run simultaneously in distant locations on two separate bodies to produce a new process, experienced by the two bodies, this time viewed as one. As a simple corollary, for each $\scrq$ in $\scrC$ and for every positive integer $n$, we should expect $n\scrq$ to be a member of $\scrC$.

Based on similar natural premises (and in a far more general setting) it was argued in \cite{feinberg1983thermodynamics} that we should expect the closed cone $\scrChat$ to be \emph{convex}. Premises of this kind  are discussed more fully in the next section, where we will suggest in more detail why, in consideration of all processes, not just cyclic ones, a far richer set $\scrPhat$ is convex.

Meanwhile, we shall suppose hereafter  in this section that, for our two-state substance, $\scrChat$ is a closed \emph{convex} cone.

\subsubsection{How the Second Law and the Hahn-Banach Theorem immediately give rise to a Clausius temperature scale}
For the two-state substance under consideration, the Second Law, as expressed in \eqref{eq:TwoStateSecondLaw}, requires that the closed convex cone $\scrChat$ be disjoint from the closed, bounded, and convex line segment $L$ in $\R^{\,2}_{\geq 0}$ with endpoints $[0,1]$ and $[1,0]$. 
See Figure \ref{fig:ScndLawHB}.

\begin{figure}[H]
\centering
\includegraphics[width=0.5\textwidth]{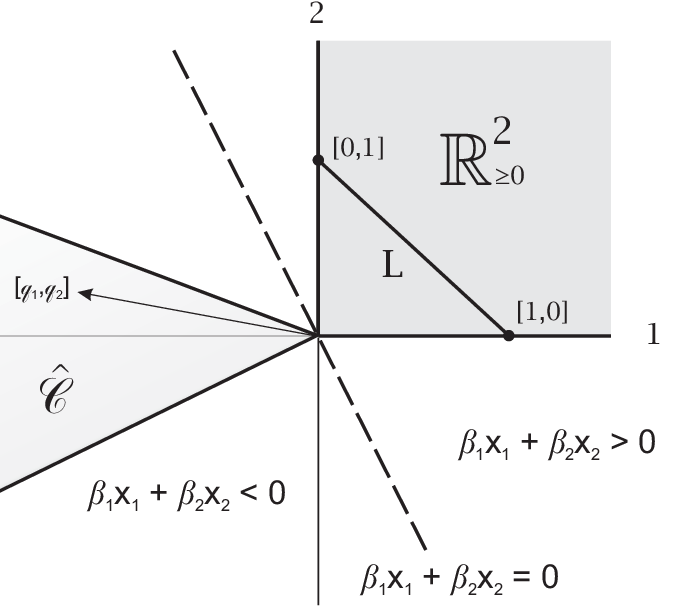}
\caption{The Second Law and the Hahn-Banach Theorem}
\label{fig:ScndLawHB}
\end{figure}

The Hahn-Banach Theorem, in the form of Corollary \ref{cor:HBCone}, then requires the existence of a linear real-valued function $\beta(\cdot)$ on $\R^2$, expressed as
\begin{equation} \label{eq:BetaDef}
\beta(x)\,:=\,\beta_1x_1 \, + \, \beta_2x_2,
\end{equation}
such that 
\begin{equation} \label{eq:ChatIneq}
\beta(\scrq)\,=\,\beta_1\scrq_1 \, + \, \beta_2\scrq_2\, \leq\,\, 0 \,\, \ \mathrm{for\, all} \,\, \scrq\,\, \mathrm{in}\,\, \scrChat
\end{equation}
and
\begin{equation}\label{eq:LIneq}
\beta(x)\,=\,\beta_1x_1 \, + \, \beta_2x_2 \, > \, 0 \,\,\  \mathrm{for \,\, all} \,\, x\,\, \mathrm{in}\,\, L.
\end{equation}
Because both $[1,0]$ and $[0,1]$ are members of $L$, it follows from \eqref{eq:LIneq} that both $\beta_1$ and $\beta_2$ are positive. Letting $T_1$ and $T_2$ be the reciprocals of $\beta_1$ and $\beta_2$, we can invoke \eqref{eq:ChatIneq} to write
\begin{equation} \label{eq:TwoStateClausiusIneq}
0 \,\, \geq \,\,\frac{\scrq_1}{T_1} \, + \, \frac{\scrq_2}{T_2},  \,\, \mathrm{for\, all} \,\, \scrq\,\, \mathrm{in}\,\, \scrC.
\end{equation}
If we regard the ``function of state"  $T(\cdot)$ that assigns the state $i$ in $\{1,2\}$ to the positive number $T_i$, \emph{then \eqref{eq:TwoStateClausiusIneq} becomes, for our two-state substance, an expression of the Clausius inequality as posited by \eqref{eq:CDInGibbsCyclic} (and the text immediately following), with $T(\cdot)$ playing the role of a Clausius temperature scale}.

	Note that the existence of such a Clausius temperature scale, satisfying the Clausius inequality for every process, follows directly and immediately from the Second Law, via the Hahn-Banach Theorem. There is no invocation of reversible cyclic processes, much less Carnot cycles, and no notion of equilibrium comes into play.
	
\subsubsection{The uniqueness of a Clausius temperature scale}\

Relative to a specified set $\scrC$ of cyclic heating vectors, it is evident that if $T(\cdot)$ satisfies the Clausius inequality \eqref{eq:CDInGibbsCyclic}, then $\alpha T(\cdot)$ will do the same for all positive $\alpha$. This is to say that a Clausius temperature scale can be unique only up to positive multiplication. For this reason, we will say that a specified $\scrC$ admits an \emph{essentially unique} Clausius temperature scale if all Clausius temperature scales are positive multiples of some fixed one.

\begin{figure}[H]
\centering
\includegraphics[width=0.4\textwidth]{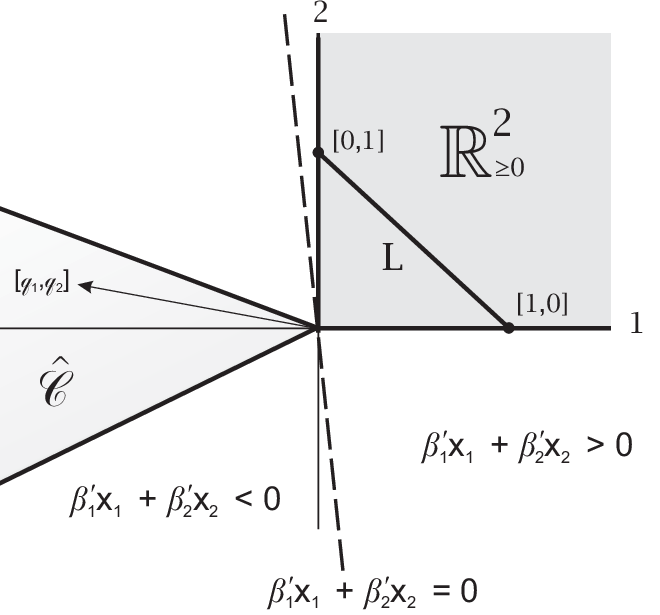}
\caption{A different linear separating function}
\label{fig:ScndLawHB2}
\end{figure}

	Taken together, Figures \ref{fig:ScndLawHB} and \ref{fig:ScndLawHB2} make clear that, for the same set of cyclic processes, there can be essentially different Clausius temperature scales. There is a variety of lines, having different orientations, that separate $\scrChat$ from $L$.
	
	In general, the more cyclic processes there are --- in particular, the broader $\scrChat\,$ becomes --- the more demanding will it be for a temperature scale to satisfy the Clausius inequality for every cyclic process. It is only when the closed convex cone  $\scrChat$ is sufficiently broad that essential uniqueness of a Clausius temperature scale can be expected. For our two-state substance, there are precisely two circumstances, as depicted in Figures \ref{fig:UniquenessLine} and \ref{fig:UniquenessHalfSpace}, in which such uniqueness will be realized: when $\scrChat$ is a full line or when $\scrChat$ is a half-space.
	
\begin{figure}[H]
\centering
\includegraphics[width=0.5\textwidth]{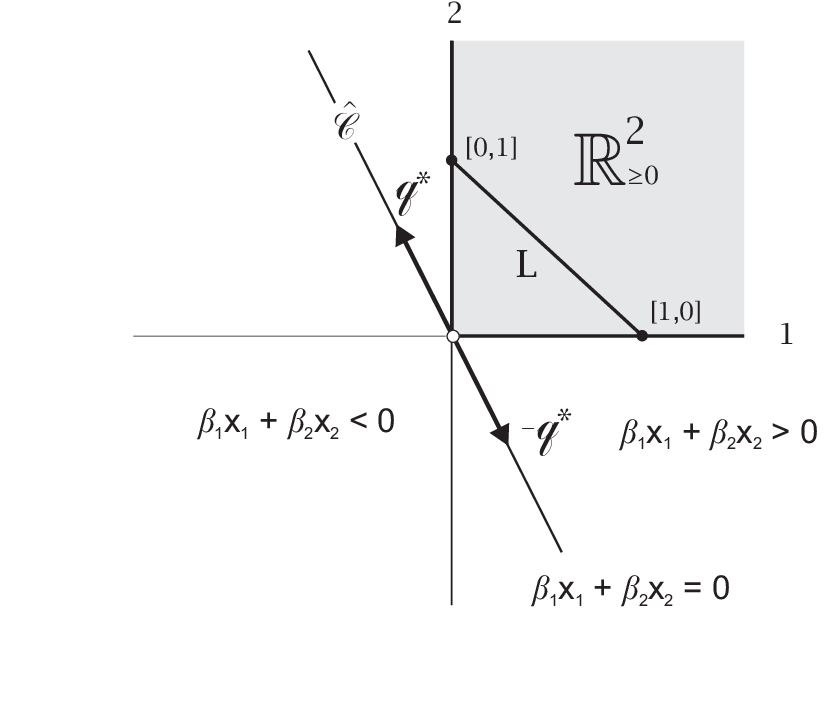}
\caption{Clausius temperature scale uniqueness: $\scrChat$ is a line}
\label{fig:UniquenessLine}
\end{figure} 

In the the first instance, depicted in Figure \ref{fig:UniquenessLine}, we imagine that, for specified positive numbers $\alpha_1$ and $\alpha_2$,  $\scrC$ is a full line: the set of all $\scrq$ in $\R^2$ that satisfy the equation $\alpha_1\scrq_1 + \alpha_2\scrq_2 = 0$. In this case $\scrChat = \scrC$, and in \eqref{eq:BetaDef} - \eqref{eq:LIneq} there is no choice (up to a positive common factor) other than $\beta_1 = \alpha_1$ and $\beta_2 = \alpha_2$, with equality holding in \eqref{eq:ChatIneq}. The resulting Clausius temperature scale, $T_1 = \frac{1}{\alpha_1}, T_2 = \frac{1}{\alpha_2}$  is essentially unique.

\begin{figure}[H]
\centering
\includegraphics[width=0.5\textwidth]{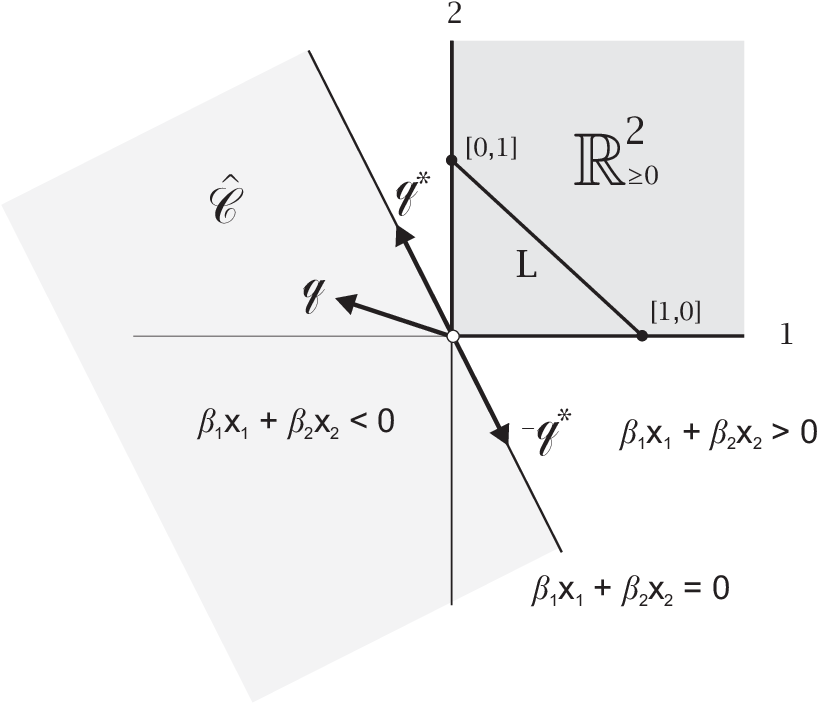}
\caption{Clausius temperature scale uniqueness: $\scrChat$ is a half-space}
\label{fig:UniquenessHalfSpace}
\end{figure} 

In the the second instance, depicted in Figure \ref{fig:UniquenessHalfSpace}, we imagine that, for specified positive numbers $\alpha_1$ and $\alpha_2$,  $\scrC$ is the set of all $\scrq$ in $\R^2$ that satisfy the inequality $\alpha_1\scrq_1 + \alpha_2\scrq_2 < 0$. In this case $\scrChat$ is a closed half-space: the set of all $\scrq$ that satisfy $\alpha_1\scrq_1 + \alpha_2\scrq_2 \leq 0$. In \eqref{eq:BetaDef} - \eqref{eq:LIneq} there is again no choice (up to a positive common factor) other than $\beta_1 = \alpha_1$ and $\beta_2 = \alpha_2$.  The resulting Clausius temperature scale is again essentially unique.

\medskip

Note that in both instances there is a \emph{reversible} element $\scrq^*$ in $\scrChat$ --- that is, a member of $\scrChat$ whose negative is also a member of $\scrChat$. If $\scrq^*$ is not itself a member of $\scrC$ (as distinct from $\scrChat$) then it is nevertheless approached arbitrarily closely by members $\scrC$, corresponding to heating vectors of actual cyclic processes, or positive multiples of them.

Note also that, at least in this cartoon two-state substance example, if there is an essentially unique Clausius temperature scale and there is at least one \emph{irreversible} heating vector in $\scrC$, then $\scrChat$ must contain \emph{all} vectors consistent with the Clausius inequality.

\subsection{Lessons of the Cartoon Example} In their almost simultaneous arguments for both the \emph{existence} and \emph{essential uniqueness} of a Clausius temperature scale, the 19\textsuperscript{th} century thermodynamics pioneers invoked the ubiquity of reversible processes, especially Carnot cycles, from the outset. For this reason, it becomes difficult to disentangle, on one hand, the role played by the Second Law itself from, on the other hand, the role played by the assumption that the various material states considered can invariably be visited by reversible processes. Existence and uniqueness are, of course, different matters, as are necessary and sufficient conditions for both.

The cartoon example considered in this section illustrates how these things can become conflated. For \emph{existence} of a Clausius temperature function of state, the Second Law by itself was \emph{sufficient}; that such a temperature scale exists derives from the Second Law alone, via the Hahn-Banach Theorem.  For that temperature scale to be \emph{essentially unique}, it was \emph{necessary} that $\scrChat$ contain reversible elements.

In the next section, we will abandon our cartoon to take up, far more generally, existence and uniqueness of two functions of the local material state, giving the local specific entropy and the local thermodynamic temperature, that together satisfy the full Clausius-Duhem inequality for all processes. Nevertheless, geometric reasoning and lessons learned in consideration of the cartoon will remain highly pertinent.

\section{Thermodynamic Theories} 

In common parlance, there are thermodynamic theories of gases, of reacting mixtures, of elastic materials, and so on.

In the preceding section we studied the thermodynamics of a cartoon substance, in which material points reveal themselves in only one of two states, labeled $1$ and $2$. For that substance, then,  the ``state space" was simply a set $\{1,2\}$ consisting of just two members. Moreover, we associated a cyclic heating vector in $\R^2$ with each of the cyclic processes that bodies  composed of the substance were deemed to admit.  In this way, the presumed nature of the cartoon substance was carried by just two sets: the state space $\Sigma := \{1,2\}$  and the set $\scrC$ of heating vectors associated with the various cyclic processes. The prescription of both $\Sigma$ and $\scrC$ in effect amounted to a specification of a ``thermodynamic theory" of the two-state substance --- in particular, a specification of just those the details essential to investigation of whether or not presumptions about the substance are consistent with one or another statements of the Second Law. 

In this section we will erect a far more general notion of a thermodynamic theory for a particular substance, again involving specification of a state space $\Sigma$ along with a set $\scrP$ of vectors (in a vector space rather different from $\R^2$), each such vector associated with a putative  process (not necessarily cyclic) that bodies composed of the substance are presumed to admit. 

This will permit us to construct something like a meta-thermodynamics wherein it becomes possible to state theorems (all involving the Hahn-Banach Theorem in their proofs) that provide universal properties of Kelvin-Planck theories --- that is, of thermodynamic theories that comply with a version of the Kelvin-Planck Second Law, in a form similar to \eqref{eq:TwoStateSecondLaw}.

\subsection{The Set of States} \label{subsect:StateSpace}To say that, for a particular substance, the specific entropy (entropy per mass) is a ``function of state" is to imply that there is a set of ``states," specified for the substance under study, that serves as the domain of the specific entropy function. For a pure substance such as water, a local state might be specified by a pair $[p,v]$ in $\R^2$, where $p$ is the local pressure and $v$ is the local specific volume (the reciprocal of the density). For a diffusive reacting mixture in which $n$ chemical species are present a local state might be specified by a vector $[c_1,c_2,\dots,c_n,\theta]$ in $\R^{n+1}$, where $c_i$ is the local molar concentration of the $i^{th}$ species, and $\theta$ is the local temperature on some empirical temperature scale. Similarly, for an elastic material, taken with some reference configuration, a local state might be specified by a vector in $\R^{10}$, the entries of which are the nine components of the local deformation gradient (relative to some basis) and the local specific internal energy.

In any case, a thermodynamic theory presumably has an intended range of applicability. In the case of the reactive mixture, the range of applicability will almost certainly not embrace molar concentrations so large that they could appear only in black holes or temperatures so high that they would be realized only on the sun. 

\emph{Hereafter, when we refer to the state space $\Sigma$ of a thermodynamic theory, it will be presumed that $\Sigma$ is a closed and bounded (and therefore compact) subset of $\mathbb{R}^N$, for some finite value of $N$, and that processes considered within the theory never realize local states outside of\, $\Sigma$.} An open set in $\Sigma$ is regarded to be one that is the intersection of $\Sigma$ with an open set of $\mathbb{R}^N$.

\subsection{The Set of Processes}\label{subsec:SetOfProcesses} 

For our cartoon substance, for which a local material point could exist in only one of two states (i.e., $\Sigma = \{1,2\}$), we identified each process with a \emph{heating vector} in  $\R^2$; that was all we needed, largely because there were just two states and our concern was only with cyclic processes, in which the condition of a body at the end of the process is the same as its condition at the beginning. Hereafter, our concerns will be far more general, in consideration of processes that might not be cyclic and of substances for which the state space might have an infinite number of elements.  With a process, we will want to describe \emph{the change of condition} of the body experiencing the process and also what we shall call the \emph{heating measure} for the process, which is analogous to the heating vector for our two-state cartoon substance. Because substances considered (such as a reacting mixture) might have an infinite number of states (each of the kind $[c_1,c_2,\dots,c_n,\theta]$, corresponding to varying temperatures and compositions), the mathematics required must be sufficiently broad as to accommodate these more general needs. 

\subsubsection{Informal mathematical preliminaries}\label{subsubsec:MathPrelims}

Consider a body of mass $M$ composed of a substance, perhaps a diffusive reacting mixture, with state space $\Sigma$. At a particular instant, we might want to describe how the states in $\Sigma$ are distributed across the body's mass. Especially when the body is in a very diffuse condition, it will be helpful to have available the idea of a \emph{positive measure on $\Sigma$}, a function that assigns a non-negative number to every open set in $\Sigma$.\footnote{As a technical matter, all measures on $\Sigma$ discussed in this article are understood to be regular Borel measures. The first two chapters of \cite{rudin_real_1987} provide a succinct and more than ample introduction to aspects of measures and integration required for this article.} At the chosen instant the condition of the body can be described by a positive measure $\scrm(\cdot)$ with the following interpretation: If $\Omega$ is an open set in $\Sigma$, then $\scrm(\Omega)$ is the mass of material having local states within $\Omega$. Note that $\scrm(\Sigma) = M$. 

	Associated with every positive measure $\scrm(\cdot)$ on $\Sigma$ and every continuous real-valued function $f(\cdot)$ on $\Sigma$ there is a real number $\int_{\Sigma}fd\scrm$, the integral of $f$ with respect to \scrm. By way of interpretation, consider a body at a fixed instant composed of a reacting mixture, with state space $\Sigma$ as indicated in $\S$\ref{subsect:StateSpace}. Suppose also that a positive measure  $\scrm(\cdot)$ at that instant describes, for the body, the distribution of mass among the states. If $v(\cdot)$ is a continuous constitutive function for the mixture that assigns, to each state $\sigma$ in $\Sigma$, the value $v(\sigma)$ of the specific volume (volume per mass) of material in state $\sigma$ (the reciprocal of the density), then $\int_{\Sigma}vd\scrm$ is the instantaneous volume of the body.
	
	Before proceeding further we note that, for each state $\sigma$ in $\Sigma$, there is a positive \emph{Dirac measure}, $\delta_{\sigma}(\cdot)$, defined by the property that, for every open set $\Omega$ in $\Sigma$, $\delta_{\sigma}(\Omega)$ is $1$ if $\sigma$ is a member of $\Omega$ and is $0$ otherwise. For every continuous real-valued function $\phi(\cdot)$ on $\Sigma$ and every $\sigma$ in $\Sigma$ it is a property of the Dirac measure that $\int_{\Sigma}\phi\, d\, \delta_{\sigma} = \phi(\sigma)$.

In consideration of heat transfer to a body composed of a substance with state space $\Sigma$, we will also have use of \emph{signed measures on $\Sigma$}, which are functions that assign a real number (perhaps negative) to every open set in $\Sigma$. Just as all real-valued functions on the same domain can be added (or subtracted) and multiplied by real numbers, there are the same natural rules for adding two signed measures and for multiplying a signed measure by a real number.

In this way, the set of all signed measures on $\Sigma$ becomes a real vector space, denoted $\scrM(\Sigma)$. The subset of $\MSigma$ consisting of all positive measures is denoted $\MSigmaPl$,\footnote{Note that $\MSigmaPl$ contains the \emph{zero measure} of $\MSigma$ --- that is, the measure that assigns $0$ to every open set of $\Sigma$.}  while $\MSigmaPlOne$ denotes the subset of all positive measures $\nu(\cdot)$ such that $\nu(\Sigma) = 1$. By $\MSigmaZ$ we mean the linear subspace of $\MSigma$ consisting of all signed measures $\mu$ such that $\mu(\Sigma) = 0$.\footnote{We will suppose that, like $\RN$, the vector space $\MSigma$ carries a notion of open and closed sets, this time given by what is called the weak-star topology \cite{rudin_functional_1991}. In ths case, $\MSigma$ is a vector space having all the attributes required in our statement of the Hahn-Banach Theorem, and $\MSigmaPlOne$ is compact \cite{choquet1969lectures}.}

As with positive measures on $\Sigma$, there is associated with each signed measure $\mu(\cdot)$ on $\Sigma$ and each continuous function $f(\cdot)$ on $\Sigma$ a real number $\int_{\Sigma}fd\mu$, the integral of $f$ with respect to $\mu$. Note that, for a specified continuous function $f(\cdot)$ on $\Sigma$, the function on $\MSigma$ that takes each $\mu$ in $\Sigma$ into $\int_{\Sigma}fd\mu$ is a real-valued continuous linear function on the vector space $\MSigma$. In fact, this statement has something like a converse that will be important: For every continuous linear real-valued $G(\cdot)$ on $\MSigma$ there is a continuous real-valued function $g(\cdot)$ on $\Sigma$ such that $G(\mu) = \int_{\Sigma}g\,d\mu$ for all $\mu$ in $\MSigma$.

\subsubsection{The change of condition for a process} Consider a process experienced by a body composed of a substance with state space $\Sigma$. By the \emph{condition} of the body at some fixed instant during the course of the process we mean a positive measure $\scrm(\cdot)$ on $\Sigma$ with the interpretation given earlier: For each open set $\Omega$ in $\Sigma$, $\scrm(\Omega)$ is the mass, at that instant, of all  material in the body having states within $\Omega$. 

If $\scrm_f(\cdot)$ is the final condition of the body at the end of the process and  $\scrm_i(\cdot)$ is its initial condition at the beginning of the process, then 
\begin{equation}
\deltam(\cdot) := \scrm_f(\cdot) - \scrm_i(\cdot)
\end{equation}
is a signed measure on $\Sigma$, called the \emph{change of condition} for the process. Because $\scrm_f(\Sigma)$ and $\scrm_i(\Sigma)$ are both the mass of the body experiencing the process, conservation of mass requires that $\deltam(\Sigma) = 0$. Thus, for every process the change of condition is not only a member of the vector space $\MSigma$, it is a member of the linear subspace $\MSigmaZ$.

\subsubsection{The heating measure for a process} In consideration of our two-state cartoon substance in Section \ref{sec:cartoon}, we associated with each cyclic process a \emph{heating vector} in $\R^2$, designed to capture for that process the nature of net heat receipt by material in each of the two states during the process's full duration. Here we need something more general, for the state space will almost always have an infinite number of elements. Moreover, we are now considering all processes, not just the cyclic ones,

Consider a process experienced by a body composed of a substance with state space $\Sigma$. With the process we associate a  \emph{heating measure} $\scrq(\cdot)$ in $\MSigma$ with the following interpretation: For each open set $\Omega$ in $\Sigma$, $\scrq(\Omega)$ is the net amount of heat absorbed from the body's exterior during the entire process by material, which at the time of heat transfer, was in local states contained within $\Omega$. This is the analog of the heating vector in  Section \ref{sec:cartoon}, with an interpretation similar to the one given there.

\subsubsection{The set $\scrP$ of processes for a thermodynamic theory} For a thermodynamic theory of a substance with state space $\Sigma$, we have associated with each process a change of condition $\deltam$ in $\MSigmaZ$ and a heating measure $\scrq$ in $\MSigma$. For our purposes going forward, that is all the information that will be needed for specification of a process. With this in mind, we will hereafter regard a process to be stipulated by the corresponding pair $(\deltam, \scrq)$ and refer to ``the process $\scrp := (\deltam, \scrq)$."

	Let $\VSigma$ be the vector space of all pairs $(\scrv,\scrw)$ with $\scrv$ in $\MSigmaZ$ and $\scrw$ in $\MSigma$; addition and multiplication by a real number are defined in the obvious way.\footnote{As a technical matter, we take $\MSigmaZ$ to have the topology it inherits as a subset of $\MSigma$ and $\VSigma$ to have the resulting product topology.} Thus, we can regard each process \scrp\ to be a  member of $\VSigma$. With a thermodynamic theory under consideration we denote by $\scrP$ the subset of $\VSigma$ consisting of all processes.

\subsubsection{The natural structure of the set of processes}	
	From consideration of natural physical premises, there is a certain amount of structure that we might expect the set of (not necessarily cyclic) processes to possess. For example: 
\begin{enumerate}[(\roman*)]
\item \label{ite:item1}  If $\scrp_1 = (\deltam_1, \scrq_1)$ and $\scrp_2 = (\deltam_2, \scrq_2)$ are members of $\scrP$ corresponding to two processes of the same duration, then
\begin{equation}\nonumber
 \scrp_1 + \scrp_2  = (\deltam_1 + \deltam_2, \scrq_1 + \scrq_2)
\end{equation}
should also be a member of $\scrP$, for the two processes can be run simultaneously on two different bodies in distant locations, giving rise to a new process operating on the two bodies, now viewed as one.  In particular, if $\scrp$ is a member of  $\scrP$, then so is  $n\scrp$  for any positive integer $n$, corresponding to a process of the same duration as that of the one giving rise to $\scrp$.

\item If $\scrp$ in $\scrP$ corresponds to a process of duration $\tau$, then it also corresponds to a process of duration $\frac{\tau}{m}$, where $m$ is any positive integer. For the original process can be regarded to be the result of a sequence of $m$ sub-processes, each of duration $\frac{\tau}{m}$ and each with its own process vector in $\scrP$. These sub-processes could instead be run simultaneously instead of sequentially,  to form a new process of duration $\frac{\tau}{m}$,  with the result, by (i), giving rise to the original process vector $\scrp$. 
\end{enumerate}

From these, it follows that \emph{if $\scrp_1$ and $\scrp_2$  are members of $\scrP$ corresponding to processes having durations with a rational ratio, then $\scrp_1 + \scrp_2$ is a member of $\scrP$}. For suppose that $\scrp_1$ and $\scrp_2$  correspond to processes of duration $\tau_1$ and $\tau_2$ with $\frac{\tau_2}{\tau_1} = \frac{m_2}{m_1}$, where ${m_1}$ and $m_2$ are positive integers. From (ii) it follows that $\scrp_1$ and $\scrp_2$  can be regarded to derive from processes having identical durations $\frac{\tau_2}{m_2} = \frac{\tau_1}{m_1}$. From (i) it follows that $\scrp_1 + \scrp_2$ is a member of $\scrP$.

\medskip
	To show that a cone in a vector space is convex --- in particular $\Cone(\scrP)$ in $\VSigma$ --- it is enough to show that the sum of any two vectors in the cone is also in the cone. The argument above falls short of that, for it considered only the sum of two  vectors in $\scrP$ (as distinct from $\Cone(\scrP)$), corresponding to processes with durations having a rational ratio.

Nevertheless in \cite{feinberglavine2024entropy1}, premises (i) and (ii), along with an additional one about the relationship between processes and their durations, resulted in an argument  to the effect that, for natural processes, we should expect the set $\scrPhat$ defined by
\begin{equation}
\scrPhat := cl(\Cone(\scrP))
\end{equation}
to be convex.

\subsection{Specification of a Thermodynamic Theory} \label{subsec:ThermoTheory}
For the purposes of this article we will hereafter regard a \emph{thermodynamic theory} be specified by a pair \theory, where the the \emph{state space} $\Sigma$ is a closed and bounded subset of $\R^N$ and the \emph{process set} $\scrP$ is a subset of $\VSigma$ such that $\scrPhat$\  is convex. 

\section{Kelvin-Planck Theories}
We repeat here the verbal statment of the Kelvin-Plank Second Law given in Section \ref{sec:cartoon}: \emph{In a cyclic process, the body experiencing the process cannot merely absorb heat from its exterior; there must be some qualitatively different heat emission as well}. Again, the intent is a stricture on the efficiency of cyclic processes: \emph{It is not possible for a body experiencing a cyclic process to merely absorb heat from its exterior and convert all of that heat into work, nor is it possible for processes to come arbitrarily close to a cyclic one achieving perfect efficiency.}  Our aim in this section is indicate what we mean by a Kelvin-Plank thermodynamic theory  --- that is, a thermodynamic theory (in sense of  preceding section) which reflects, in precise way, what we take to be the intent of the Kelvin-Planck Second Law.

By a \emph{cyclic process} in a thermodynamic theory \theory\, we mean a process for which the condition of the body experiencing the process is the same at both the beginning and end of the process. That is, a process $\scrp = \process$ in $\scrP$ is  \emph{cyclic} if $\deltam = 0$.

	Consider a cyclic process $\scrp = (0,\scrq)$ in $\scrP$. Suppose that the heating measure $\scrq$ takes a positive value on some open set in $\Sigma$, implying that during the course of the process represented by $\scrp$ the body suffering the process absorbs heat from its exterior. In that case, we take the  Kelvin-Planck Second Law to require that there be a different open set for which $\scrq$ takes a negative value, implying heat emission. This is to say that no member of $\scrP$ should be of the form $(0,\scrq)$, where $\scrq$ is a non-zero member of the set of positive measures, $\MSigmaPl$.  
	
	If we denote by $(0,\MSigmaPl)$ the subset of all members of $\VSigma$ of the form $(0,\nu)$, with $\nu$ in $\MSigmaPl$, then this requirement of the Kelvin-Planck Second Law can be expressed as a constraint on the set $\scrP$ of processes in the form	
\begin{equation}\
\scrP \cap \, (0,\MSigmaPl)\  \textrm{is at most}\, (0,0).
\end{equation}
A stronger requirement, forbidding even an approach by processes to a violation of this constraint, could be expressed as
\begin{equation}\label{eq:TentKPSecondLaw}
cl\,(\scrP) \cap \, (0,\MSigmaPl)\  \textrm{is at most}\, (0,0).
\end{equation}

Here again, though, the discussion of the cartoon two-state substance is instructive, especially in \S\,\ref{subsubsec:EffCounteExamp}. Counterexamples in \cite{feinberglavine2024entropy1} indicate that even the strengthened \eqref{eq:TentKPSecondLaw} falls short of denying an approach of processes to a cyclic one of perfect efficiency. The problem, as in \S\,\ref{subsubsec:EffCounteExamp}, is that members of $\scrP$ can come close ``in direction" to members of $(0,\MSigmaPl)$.  To preclude an approach to cyclic-process perfect efficiency, the ultimate intent of the Kelvin-Planck Second Law, it is necessary to strengthen \eqref{eq:TentKPSecondLaw} still further:

\begin{equation}
cl\,(\Cone(\scrP)) \cap \, (0,\MSigmaPl)\ =\, \{(0,0)\}.
\end{equation}

	By a \emph{Kelvin-Planck theory} we hereafter mean a thermodynamic theory \theory \, such that
\begin{equation} \label{eq:KPFinal}
\scrPhat \cap \, (0,\MSigmaPl)\ =\, \{(0,0)\},
\end{equation}
where $\scrPhat$ is again the closed convex cone defined by	
\begin{equation}
\scrPhat := cl(\Cone(\scrP)).
\end{equation}

\section{Hahn-Banach Existence of Temperature and Entropy Functions of State} \label{sec:HBExistence}

In this section we will show that for any thermodynamic theory that complies with the Kelvin-Planck Second Law, as expressed by \eqref{eq:KPFinal}, the Hahn-Banach Theorem (in the form of Corollary \ref{cor:HBCone}) \emph{immediately and simultaneously} ensures the existence of two continuous functions of the local state --- one giving the local specific entropy and the other giving the local thermodynamic temperature --- such that the Clausius-Duhem inequality is satisfied for all processes. In fact, the converse is true as well: the existence of such functions implies a Second Law of the form \eqref{eq:KPFinal}. 

\begin{theorem}[Existence of Entropy and Temperature] \label{thm:CDPairExistence} For a thermodynamic theory \theory, the following are equivalent:
\begin{enumerate}[(i)]
\item $\theory\,$  is a Kelvin-Planck theory.
\item There are continuous real-valued functions on $\Sigma$,\, $\eta(\cdot)$ and  $T(\cdot)$, with $T(\cdot)$ taking only positive values, such that
\begin{equation}\label{eq:CDIneqFormal}
\int_{\,\Sigma}\eta\,d\deltam\,\,\geq\,\,\int_{\,\Sigma}\frac{d\scrq}{T}\,\,\,\, \mathrm{for\,\, all}\,\,\,(\deltam,\scrq)\, \,\,\mathrm{in}\,\, \scrP.
\end{equation}
\end{enumerate}
\end{theorem}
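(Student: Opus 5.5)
For (i) $\Rightarrow$ (ii), I would apply Corollary \ref{cor:HBCone} in the space $V = \VSigma$. Take $A = \scrPhat$, which is a closed convex cone because $\theory$ is a thermodynamic theory. Take $B = (0,\MSigmaPlOne)$, the set of pairs $(0,\nu)$ with $\nu$ a positive measure of total mass one. This $B$ is nonempty and convex. It is also compact, since $\MSigmaPlOne$ is weak-star compact and $\{0\}\times\MSigmaPlOne$ is compact in the product topology. Every member of $B$ is a nonzero element of $(0,\MSigmaPl)$, so the Kelvin-Planck condition \eqref{eq:KPFinal} gives $A \cap B = \emptyset$. The corollary then yields a continuous linear functional $F$ on $\VSigma$ with $F \le 0$ on $\scrPhat$ and $F > 0$ on $B$.

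Next I would represent $F$. Since $\VSigma$ carries the product topology, $F(\scrv,\scrw) = F_1(\scrv) + F_2(\scrw)$, where $F_1$ is continuous and linear on $\MSigmaZ$ and $F_2$ is continuous and linear on $\MSigma$. By the representation fact recalled in \S\ref{subsubsec:MathPrelims}, there is a continuous $g$ on $\Sigma$ with $F_2(\scrw) = \int_\Sigma g\,d\scrw$. The step I expect to be the main technical point is representing $F_1$, because it lives only on the subspace $\MSigmaZ$. I would extend it to all of $\MSigma$ as follows. Fix a state $\sigma_0$ and decompose $\mu = (\mu - \mu(\Sigma)\delta_{\sigma_0}) + \mu(\Sigma)\delta_{\sigma_0}$. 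Define $\tilde F_1(\mu) := F_1(\mu - \mu(\Sigma)\delta_{\sigma_0})$. This is linear, and it is weak-star continuous because $\mu \mapsto \mu(\Sigma) = \int_\Sigma 1\,d\mu$ is continuous. So $\tilde F_1(\mu) = \int_\Sigma h\,d\mu$ for some continuous $h$, and in particular $F_1(\scrv) = \int_\Sigma h\,d\scrv$ on $\MSigmaZ$. (Alternatively one could invoke the Hahn-Banach extension theorem for locally convex spaces.)

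Positivity of temperature comes from $B$. For each $\sigma$, the pair $(0,\delta_\sigma)$ lies in $B$, so $g(\sigma) = F(0,\delta_\sigma) > 0$. I would then set $T := 1/g$, which is continuous and positive because $\Sigma$ is compact and $g$ is positive, and set $\eta := -h$. Since $\scrP \subset \scrPhat$, the inequality $F \le 0$ on $\scrP$ reads $\int_\Sigma h\,d\deltam + \int_\Sigma g\,d\scrq \le 0$, which is exactly \eqref{eq:CDIneqFormal}.

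For (ii) $\Rightarrow$ (i), I would define $G(\scrv,\scrw) := \int_\Sigma \eta\,d\scrv - \int_\Sigma T^{-1}\,d\scrw$. This is continuous and linear, and $G \ge 0$ on $\scrP$. The set $\{G \ge 0\}$ is a closed convex cone, so it contains $\scrPhat$. Now suppose $(0,\nu) \in \scrPhat$ with $\nu \in \MSigmaPl$. Then $G(0,\nu) = -\int_\Sigma T^{-1}\,d\nu \ge 0$. Because $T^{-1}$ is continuous and strictly positive, it is bounded below by some $c > 0$ on the compact set $\Sigma$, so $c\,\nu(\Sigma) \le 0$. Hence $\nu(\Sigma) = 0$, and since $\nu$ is positive, $\nu = 0$. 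Finally, $(0,0) \in \scrPhat$ always holds, as a limit of $\alpha\scrp$ with $\alpha \to 0$ whenever $\scrP$ is nonempty. This gives \eqref{eq:KPFinal}.
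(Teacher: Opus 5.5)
Your proof is correct and follows the paper's route for (i)$\Rightarrow$(ii): you separate $\scrPhat$ from $(0,\MSigmaPlOne)$ via Corollary \ref{cor:HBCone}, represent the functional by integrals, and get positivity of $1/T$ from Dirac measures. Your extension $\mu \mapsto \mu - \mu(\Sigma)\delta_{\sigma_0}$ makes explicit the representation on $\MSigmaZ$ that the paper takes from a citation, and your (ii)$\Rightarrow$(i) argument supplies the step the paper only calls ``not difficult.'' The one caveat, which you already flag, is that (ii)$\Rightarrow$(i) needs $\scrP \neq \emptyset$: for $\scrP = \emptyset$, (ii) holds vacuously while $(0,0) \notin \scrPhat$, and the paper's statement tacitly ignores this degenerate case.
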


\begin{proof} To prove that (i) implies (ii), we suppose that \theory\ is a Kelvin-Planck theory, in which case \eqref{eq:KPFinal} holds. Then the closed convex cone $\scrPhat$ in $\VSigma$ is disjoint from $(0,\MSigmaPlOne)$, the compact convex subset of $(0,\MSigmaPl)$ consisting of all of its members $(0,\nu)$ such that $\nu(\Sigma) = 1$. (The set $(0,\MSigmaPlOne)$ will play the role of the line segment $L$ in the cartoon example.)

From Corollary \ref{cor:HBCone} it follows immediately that there is a continuous linear function $f(\cdot)$ on $\VSigma$ such that

\begin{equation}\label{eq:HB<}
f((\scrv,\scrw)) \leq 0 \,\,\, \textrm{for all}\,\, (\scrv,\scrw)\,\, \textrm{in}\,\, \scrPhat
\end{equation}
and
\begin{equation} \label{eq:HB>}
 f((0,\scrw)) > 0 \,\,\, \textrm{for all}\,\, \scrw\,\,  \textrm{in}\,  \MSigmaPlOne.
\end{equation}

Now every continuous linear real-valued function on $\VSigma$ has a representation in terms of integrals. (See \cite{rudin_functional_1991}, \S3.14).) In particular, for $f(\cdot)$,  there exist continuous real-valued functions $- \eta(\cdot)$ and  $\beta(\cdot)$ on $\Sigma$ such that
\begin{equation}\label{eq:IntRep}
f((\scrv,\scrw)) = - \int_{\Sigma}\eta\,d\scrv + \int_{\Sigma}\beta\,d\scrw \,\,\,\, \mathrm{for\; all}\,\,\, (\scrv,\scrw)\,\, \,\mathrm{in}\,\,\,\VSigma.
\end{equation}

Recall from \S\,\ref{subsubsec:MathPrelims}  that, for each state $\sigma$ in $\Sigma$, there is in $\MSigmaPlOne$ a Dirac measure, $\delta_{\sigma}(\cdot)$ and that for every continuous real-valued function $\phi(\cdot)$ on $\Sigma$ and every $\sigma$ in $\Sigma$ it is a property of the Dirac measure that $\int_{\Sigma}\phi\, d\, \delta_{\sigma} = \phi(\sigma)$.

	With the representation of $f(\cdot)$ taken as in \eqref{eq:IntRep}, it follows from \eqref{eq:HB>} that 
\begin{equation}
f((0,\delta_{\sigma}) )= \int_{\Sigma}\beta\,d\delta_{\sigma} = \beta(\sigma) > 0 \,\,\,\, \mathrm{for\,\,all} \,\,\,\,\sigma\,\, \mathrm{in}\, \Sigma.
\end{equation}
That is, $\beta(\cdot)$ takes strictly positive values. Hereafter we let $T(\cdot)$ be the reciprocal of $\beta(\cdot)$. Since $\scrP$ is a subset of $\scrPhat$,  \eqref{eq:HB<} and \eqref{eq:IntRep}  ensure that \eqref{eq:CDIneqFormal} in part (ii) of the theorem statement is satisfied.

Proof that (ii) implies (i) is not difficult.
\end{proof}

	We will of course interpret $\eta(\cdot)$ and $T(\cdot)$ as functions of the local state that give values of the local specific entropy (entropy per mass) and the local thermodynamic temperature. 	Note that if $\process$ is a member of $\scrP$ corresponding to a physical process for which $\deltam = \scrm_f - \scrm_i$, where $\scrm_f$ and  $\scrm_i$ are the conditions of the body experiencing the process  at its end and its initiation, then the first term on the left of \eqref{eq:CDIneqFormal} is just the difference between the final and initial total entropies of the body, and \eqref{eq:CDIneqFormal}  embodies the Clausius-Duhem inequality as expressed in  \eqref{eq:CDInGibbs}.

\medskip
	
	Hereafter, by a \emph{Clausius-Duhem pair} for a Kelvin-Planck theory \theory\ we mean a pair $(\eta(\cdot),T(\cdot)$) of continuous functions on $\Sigma$, with $T(\cdot)$ taking strictly positive values, such that \eqref{eq:CDIneqFormal} is satisfied. A function $T(\cdot)$ is  a \emph{Clausius-Duhem temperature scale} for the theory if there exists $\eta(\cdot)$  such that $(\eta(\cdot),T(\cdot))$ is a Clausius-Duhem pair. In that case, $\eta(\cdot)$ is a \emph{specific-entropy function} for the theory (corresponding to the Clausius-Duhem temperature scale $T(\cdot))$.

\medskip
Before this section is closed, there are two important things to note: 

\begin{rem} \label{rem:WeDoNotAssert}
We do not assert that  \eqref{eq:KPFinal} is the ``correct" embodiment of the Kelvin-Planck Second Law. We assert only that, for any thermodynamic theory in the sense of $\S$\ref{subsec:ThermoTheory}, equation \eqref{eq:KPFinal} and the existence of a Clausius-Duhem entropy-temperature pair are \emph{equivalent}.
\end{rem}

\begin{rem}[\emph{For existence of entropy and thermodynamic temperature, equilibrium is irrelevant}] \label{rem:ExistenceAndEquilibrium}
In every Kelvin-Planck theory the  simultaneous \emph{existence} of specific entropy and thermodynamic temperature functions of state is an almost \emph{immediate} consequence of the Hahn-Banach Theorem. There is no invocation of conceptual machinery such as reversible processes, much less of Carnot cycles. The words \emph{equilibrium} or \emph{quasi-static} never appear, and there is no reason to suppose that elements of $\Sigma$ are restricted to equilibrium states.
\end{rem} 

\medskip
The essential \emph{uniqueness} of these functions is a different matter, as was foretold by our cartoon example and as we shall see in Sections  \ref{sec:TempUniqueness} and \ref{sec:EntUniqueness}.

\section{Digression: Hotness and Carnot Elements}

In the next section we will state a theorem that gives conditions under which, for a Kelvin-Planck theory \theory, there is an essentially \emph{unique} Clausius-Duhem pair. In particular, the Hahn-Banach Theorem will indicate that, for such uniqueness, $\scrPhat$ \emph{must} contain a large supply of (reversible, cyclic) \emph{Carnot elements}. This section is meant to provide a few preliminaries.

\subsection{Reversible and Cyclic Elements of \scrPhat} \label{subsec:Reversible&Cyc} By a \emph{reversible element} of a thermodynamic theory $\theory$ we mean a member of $\scrPhat$ such that its negative is also a member of $\scrPhat$. By a \emph{cyclic element} of the theory we mean a member of $\scrPhat$ of the form $(0,\scrq)$, with $\scrq$ some member of $\MSigma$.
Note that these need not be  processes in $\scrP$; they need only be approximated arbitrarily closely by positive multiples of processes in  $\scrP$.

\subsection{Hotness} \label{subsec:Hotness} For the thermodynamic theory \theory, we say that two states $\sigma$ and $\sigma'$ in $\Sigma$ are \emph{of the same hotness} (denoted $\sigma' \sim \sigma$) if $\scrPhat$ contains a reversible cyclic element of the form $(0, \delta_{\sigma'} - \delta_{\sigma})$, where $\delta_{\sigma'}$ and $\delta_{\sigma}$ are, as in \S\,\ref{subsubsec:MathPrelims}, the Dirac measures at $\sigma'$ and $\sigma$. 

Were $(0, \delta_{\sigma'} - \delta_{\sigma})$ an actual process in \scrP, rather than a limiting element of \scrPhat, it could be viewed as having derived from a cyclic process performed on a body in which heat is absorbed from the body's exterior only by material in state $\sigma'$, passed through the body, and then emitted from the body in equal amount only by material while in state $\sigma$. Note that there is no work. For the reverse cyclic process, heat flow is in the opposite direction.

For a thermodynamic theory \theory, ``$\sim$" is an equivalence relation in $\Sigma$; in particular, if $\sigma \sim \sigma'$ and $\sigma' \sim \sigma^{''}$ then $\sigma \sim \sigma^{''}$. In this way, $\Sigma$ becomes partitioned into \emph{hotness levels}: A hotness level is a subset $h$ of $\Sigma$ such that if $\sigma$ is a member of $h$, then $h$ contains those and only those states in $\Sigma$ that are of the same hotness as $\sigma$.

	Note that, for a thermodynamic theory \theory, the hotness relation ``$\sim$" and the partition of states in $\Sigma$ into hotness levels \emph{derives entirely from the processes the theory contains.} There is no reference to temperature. 
	
\subsection{Hotness and Temperature} For a Kelvin-Planck theory \theory, Theorem \ref{thm:CDPairExistence} ensures that there is at least one Clausius-Duhem temperature scale. But to call such a function a \emph{temperature scale} requires, at the very least, that its values indicate when two states in $\Sigma$ are of the same hotness, suitably defined. The following theorem connects hotness in the sense of $\S$\ref{subsec:Hotness} with properties of the set of Clausius-Duhem temperature scales for the theory.

\begin{samepage}
\begin{theorem}
\label{thm:EqualHotness}
 Let $\sigma \in \Sigma$ and $\sigma' \in \Sigma$ be two states of the Kelvin-Planck theory \theory. The following are equivalent:
\begin{enumerate}[(i)]
\item $\sigma$ and $\sigma'$ are of the same hotness.
\item $T(\sigma) = T(\sigma')$ on each Clausius-Duhem temperature scale for \theory.
\end{enumerate}
\end{theorem}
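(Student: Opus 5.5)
The direction (i) $\Rightarrow$ (ii) is a direct computation. Suppose $(0,\delta_{\sigma'}-\delta_{\sigma})$ and its negative both lie in $\scrPhat$, and let $(\eta(\cdot),T(\cdot))$ be any Clausius-Duhem pair. Write $f((\scrv,\scrw)) := -\int_\Sigma \eta\,d\scrv + \int_\Sigma \frac{1}{T}\,d\scrw$. This is a continuous linear function on $\VSigma$, and \eqref{eq:CDIneqFormal} says $f \le 0$ on $\scrP$. By linearity $f\le 0$ on $\Cone(\scrP)$, and by continuity $f\le0$ on $\scrPhat$. Applying this to both $\pm(0,\delta_{\sigma'}-\delta_{\sigma})$ and using the Dirac property gives
\[
\frac{1}{T(\sigma')}-\frac{1}{T(\sigma)} \le 0 \quad\text{and}\quad \frac{1}{T(\sigma)}-\frac{1}{T(\sigma')}\le 0 .
\]
Hence $T(\sigma)=T(\sigma')$.

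For (ii) $\Rightarrow$ (i) I would argue by contraposition with the Hahn-Banach Theorem. Suppose $\sigma\not\sim\sigma'$. Then one of $x:=(0,\delta_{\sigma'}-\delta_\sigma)$ or $-x$ is not in $\scrPhat$; by relabelling, say $x\notin\scrPhat$. The aim is a Clausius-Duhem pair with $T(\sigma)\ne T(\sigma')$. The plan is to separate the closed convex cone $\scrPhat$ from a compact convex set that contains both $(0,\MSigmaPlOne)$ (to force positivity of $\beta$) and $x$.

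The natural choice is $K:=\{t\,x+(1-t)(0,\nu): t\in[0,1],\ \nu\in\MSigmaPlOne\}$, which is compact and convex. The main obstacle is showing $K\cap\scrPhat=\emptyset$. If $t x+(1-t)(0,\nu)\in\scrPhat$ with $t<1$, this says $x+s(0,\nu)\in\scrPhat$ for some $s\ge0$. That is not immediately contradictory, since $x$ plus a positive heating could well be in $\scrPhat$.

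So I would take a two-stage approach instead. First separate $x$ alone from $\scrPhat$ by Corollary \ref{cor:HBCone}, applied with $B=\{x\}$. This yields a continuous linear $g$, represented as in \eqref{eq:IntRep} by functions $(-\eta_1,\beta_1)$, with $g\le0$ on $\scrPhat$ and $g(x)=\beta_1(\sigma')-\beta_1(\sigma)>0$. Here $\beta_1$ need not be positive. Next take the Clausius-Duhem pair $(\eta_0,1/\beta_0)$ from Theorem \ref{thm:CDPairExistence}, and set $f_\epsilon:=f_0+\epsilon g$. Each $f_\epsilon$ is still $\le0$ on $\scrPhat$, since both summands are. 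Because $\Sigma$ is compact, $\beta_0$ is bounded below by a positive constant and $\beta_1$ is bounded, so $\beta_0+\epsilon\beta_1$ stays strictly positive for small $\epsilon>0$. This produces a legitimate family of Clausius-Duhem pairs with temperature scales $T_\epsilon=1/(\beta_0+\epsilon\beta_1)$.

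Finally, if $T_0(\sigma)=T_0(\sigma')$, then $\beta_\epsilon(\sigma')-\beta_\epsilon(\sigma)=\epsilon(\beta_1(\sigma')-\beta_1(\sigma))>0$, so $T_\epsilon(\sigma)\ne T_\epsilon(\sigma')$. If instead $T_0(\sigma)\ne T_0(\sigma')$, we are already done. Either way (ii) fails, which completes the contrapositive. The only delicate point is the perturbation step keeping $\beta$ positive, and compactness of $\Sigma$ together with continuity of $\beta_1$ handles it.
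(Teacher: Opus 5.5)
Your proof is correct and follows the route the paper indicates, which defers the full details to the cited references. For (i)$\Rightarrow$(ii) you evaluate the Clausius--Duhem functional, extended by continuity to $\scrPhat$, on $\pm(0,\delta_{\sigma'}-\delta_{\sigma})$; for (ii)$\Rightarrow$(i) you separate that single point from $\scrPhat$ by Hahn--Banach and perturb a known Clausius--Duhem functional by $\epsilon g$, using compactness of $\Sigma$ to keep $\beta_0+\epsilon\beta_1>0$, which correctly handles the fact that $g$ itself need not have a positive $\beta$-part.
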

\end{samepage}

Theorem \ref{thm:EqualHotness} is proved in \cite{feinberglavine2024entropy2} and, in a slightly different context, in \cite{feinberg1983thermodynamics}. Proof of the implication $(i) \Rightarrow (ii)$ is elementary. The implication $(ii) \Rightarrow (i)$ requires the Hahn-Banach Theorem and is far deeper: It asserts that \emph{if no Clausius-Duhem temperature scale distinguishes between states $\sigma'$ and $\sigma$, then the processes of the theory must be such that $\scrPhat$ contains a reversible cyclic element of the form} $(0, \delta_{\sigma'} - \delta_{\sigma})$.

	The introduction of hotness and, in particular, hotness levels will be sufficient for our definition of \emph{Carnot elements} in a thermodynamic theory. These in turn will play a crucial role in Sections \ref{sec:TempUniqueness} and \ref{sec:EntUniqueness}, where we consider conditions a Kelvin-Planck theory must satisfy in order for a Clausius-Duhem entropy-temperature pair to be unique. Here, we are omitting discussion of a \emph{hotter than} relation  for a Kelvin-Planck theory (defined solely in terms of processes the theory contains) and that relation's reflection in the theory's Clausius-Duhem temperature scales \cite{feinberg1983thermodynamics,feinberg1986foundations,feinberglavine2024entropy2}.

\subsection{Carnot Elements of \scrPhat} In the classical thermodynamics literature, a Carnot cycle is a cyclic reversible process in which the body experiencing the process receives heat only at one hotness and emits heat only at a different hotness. Carnot cycles are usually not deemed to be exactly realizable, but, mainly for the purpose of reversibility, they are regarded to be approximated increasingly well by actual processes operating more and more slowly.

To capture the Carnot-cycle idea in the framework erected so far, a small amount of new vocabulary will be needed. Suppose that $\scrm(\cdot)$ is a positive measure on $\Sigma$, and suppose that, for the purpose of illustration, $\scrm(\cdot)$ describes the instantaneous condition of a body. That is, for an open set $\Omega$ in $\Sigma$, $\scrm(\Omega)$ is the mass of all material in the body in local states contained in $\Omega$. It is possible that $\scrm(\Omega) = 0$ for a particular open set $\Omega$ in $\Sigma$, as when none of the material in the body is in a state contained in $\Omega$. By the \emph{support of} $\scrm(\cdot)$ we mean the set of states that are actually present. More precisely, if $\mu(\cdot)$ is a positive measure on $\Sigma$, then the \emph{support of} $\mu(\cdot)$ (denoted $\supp \mu(\cdot)$) is the complement in $\Sigma$ of the union of all open sets $\Omega$ such that $\mu(\Omega) = 0$.

For a Kelvin-Planck theory \theory, a \emph{Carnot element} is a cyclic reversible element $(0,\scrq)$ of $\scrPhat$, where $\scrq(\cdot)$ has a representation of the following kind: There are hotness levels $h'$ and $h$ in $\Sigma$ such that 
\begin{equation}
\scrq(\cdot) = \mu'(\cdot) - \mu(\cdot),\label{eq:Carnotq}
\end{equation}
where $\mu'(\cdot)$ and $\mu(\cdot)$ are non-zero measures in $\MSigmaPl$ satisfying
\begin{equation}\label{eq:suppCondition}
\supp\, \mu'(\cdot)\, \mathrm{\;lies\; in\,\,} h' \quad\mathrm{and}\quad \supp\, \mu(\cdot)\,  \mathrm{\;lies\; in\,\,} h.
\end{equation}
In this case, the Carnot element \emph{operates between hotness levels h' and h}.
With $\scrq(\cdot)$ in \eqref{eq:Carnotq} interpreted as a heating measure, \eqref{eq:suppCondition} requires that heat be absorbed only by material of hotness $h'$ and emitted only by material of hotness $h$.

Note that, for a Kelvin-Planck theory $\theory$, a reversible member of $\scrPhat$ of the form  $(0, c'\delta_{\sigma'} - c\delta_{\sigma})$, with   $c'$  and  $c$  positive numbers, is a Carnot element. It is a very singular one, indicating that heat is absorbed only by material precisely in state $\sigma'$ and is emitted only by material precisely in state $\sigma$. 


\section{Uniqueness of Temperature in a Kelvin-Planck Theory}\label{sec:TempUniqueness}

Suppose that, for a Kelvin-Planck theory \theory, $T^{\circ}(\cdot)$ is a Clausius-Duhem temperature scale. Then, for every positive number $\alpha$, $\alpha T^{\circ}(\cdot)$ is also a Clausius-Duhem temperature scale. In fact, for $T^{\circ}(\cdot)$ to be a Clausius-Duhem temperature scale there must be for the theory an entropy function $\eta^{\circ}(\cdot)$  such that the pair $(\eta^{\circ}(\cdot),T^{\circ}(\cdot))$ satisfies the Clausius-Duhem condition \eqref{eq:CDIneqFormal}.  It is not difficult to see that if $\alpha$ and $\beta$ are numbers, with $\alpha$ positive, then  the functions $\eta(\cdot)$ and $T(\cdot)$ defined by 
\begin{equation} \label{eq:CDPairNonunique}
T(\cdot) = \alpha\To(\cdot)\quad and \quad \eta(\cdot) = \frac{1}{\alpha}\etao(\cdot)
 + \beta. 
\end{equation}
satisfy the condition \eqref{eq:CDIneqFormal} and so also constitute a Clausius-Duhem pair for \theory. For this reason, we cannot expect uniqueness of a Clausius-Duhem temperature scale, nor can we expect uniqueness of a Clausius-Duhem pair. 

Hereafter we say that a Clausius-Duhem temperature scale $T^{\circ}(\cdot)$ for a Kelvin-Planck theory is \emph{essentially unique} if every other Clausius-Duhem temperature scale is a positive multiple of $T^{\circ}(\cdot)$. Similarly we say that a Clausius-Duhem pair $(\eta^{\circ}(\cdot),T^{\circ}(\cdot))$ for a Kelvin-Planck theory is \emph{essentially unique} if every other Clausius-Duhem pair for the theory is of the form \eqref{eq:CDPairNonunique} for some values of $\alpha > 0$ and $\beta$.

In this section we will examine conditions that are both necessary and sufficient for a Kelvin-Planck theory to possess an essentially unique Clausius-Duhem temperature scale. In the next section, we will examine additional necessary and sufficient conditions for which the corresponding the corresponding specific entropy function is also essentially unique.

From the cartoon example in Section \ref{sec:cartoon} we learned that, in consideration of a toy two-state substance, essential uniqueness of a Clausius temperature scale required that the set of cyclic processes be sufficiently broad as to ensure that $\scrChat$ contains a full line in $\mathbb{R}^2$ through the origin and, as a consequence, a large supply of reversible elements. 

The situation here will be similar. For a Kelvin-Planck theory \theory\, to have an essentially unique Clausius-Duhem temperature scale it is not only sufficient  \emph{but also necessary} that $\scrPhat$ contain a very large supply of reversible elements, in fact the entire linear subspace of $\VSigma$ consisting of all members of $\VSigma$ of the form $(0,\scrq)$ that satisfy the Clausius-Duhem $equality$. This will imply, among other things, that $\scrPhat$ \emph{must} be very rich in Carnot elements.

\medskip
In the statement of Theorem \ref{thm:UniquenessTemp}, $\delta_{\sigma}$ and $\delta_{\sigma'}$ are the Dirac measures at $\sigma$ and $\sigma'$, as described in \S\,\ref{subsubsec:MathPrelims}.

\begin{theorem}[Temperature Uniqueness and the Inexorable Role of Carnot Elements]\label{thm:UniquenessTemp}  Let \theory\ be a Kelvin-Planck theory and let $T(\cdot)$ be a Clausius-Duhem temperature scale on $\Sigma$. The following are equivalent.
\begin{enumerate}[(i)]
\item Every Clausius-Duhem temperature scale on $\Sigma$ is a positive multiple of $T(\cdot)$.
\item If \scrq\ is a member of $\scrM(\Sigma)$ that satisfies
\begin{equation}\label{eq:q/TIntegrates to zero}
\int_{\Sigma}\frac{d\scrq}{T} = 0
\end{equation}
then $(0,\scrq)$ is a member of \scrPhat.  
\item For each pair of hotness levels in $\Sigma$ there is a Carnot element of \theory\  operating between them.
\item For each pair of states $\sigma' \in \Sigma$ and $\sigma \in \Sigma$ there is a Carnot element of \theory\ operating between them,  having the form $(0, \scrq)$ with
\begin{equation}\label{eq:qInTwoStateCarnotCycle}
\scrq = c' \,\delta_{\sigma'} -  c\,\delta_{\sigma} \quad \mathrm{and} \quad \frac{c'}{c} = \frac{T(\sigma')}{T(\sigma)}.
\end{equation} 
\end{enumerate}
\end{theorem}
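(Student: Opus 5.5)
The plan is to prove the cycle of implications (i)$\Rightarrow$(ii)$\Rightarrow$(iv)$\Rightarrow$(iii)$\Rightarrow$(i). The only deep step is (i)$\Rightarrow$(ii); it uses Corollary \ref{cor:HBCone} once more, together with a perturbation argument. The other three steps use only the Clausius-Duhem inequality \eqref{eq:CDIneqFormal}, extended to $\scrPhat$, and the elementary half of Theorem \ref{thm:EqualHotness}.

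\emph{(i)$\Rightarrow$(ii).} Let $\eta(\cdot)$ be an entropy function with $(\eta(\cdot),T(\cdot))$ a Clausius-Duhem pair. Set
$$f_0((\scrv,\scrw)) := -\int_\Sigma \eta\, d\scrv + \int_\Sigma \frac{d\scrw}{T}.$$
Then $f_0$ is continuous and linear on $\VSigma$, $f_0 \le 0$ on $\scrP$, and hence $f_0 \le 0$ on $\scrPhat$ by positive homogeneity and continuity.

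Suppose $\scrq$ satisfies \eqref{eq:q/TIntegrates to zero} but $(0,\scrq)\notin\scrPhat$. Apply Corollary \ref{cor:HBCone} to the closed convex cone $\scrPhat$ and the compact convex set $\{(0,\scrq)\}$. This gives a continuous linear $f$ with $f\le 0$ on $\scrPhat$ and $f((0,\scrq))>0$. By the integral representation used in the proof of Theorem \ref{thm:CDPairExistence}, $f((\scrv,\scrw)) = -\int g\,d\scrv + \int b\,d\scrw$ for continuous functions $g(\cdot)$ and $b(\cdot)$ on $\Sigma$.

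Now consider $f_0 + t f$ for $t>0$. This functional is still $\le 0$ on $\scrPhat$. Its heating part is $1/T + t b$. Since $\Sigma$ is compact, $1/T$ has a positive minimum and $b$ is bounded, so $1/T + tb>0$ for small $t$. Hence $(\eta + t g,\ (1/T+tb)^{-1})$ is a Clausius-Duhem pair. By (i), $1/T + t b = \lambda_t/T$ for some $\lambda_t>0$, so $b = c/T$ for a constant $c$. Then
$$f((0,\scrq)) = c\int_\Sigma \frac{d\scrq}{T} = 0,$$
which is a contradiction. I expect this perturbation step to be the main obstacle. The delicate points are checking that positivity survives the perturbation, which is where compactness of $\Sigma$ enters, and that the perturbed functional is still nonpositive on all of $\scrPhat$.

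\emph{(ii)$\Rightarrow$(iv).} Take $\scrq = c'\delta_{\sigma'} - c\,\delta_\sigma$ with $c'/c = T(\sigma')/T(\sigma)$. Then $\int d\scrq/T = 0$, and the same holds for $-\scrq$. By (ii), both $(0,\scrq)$ and $(0,-\scrq)$ lie in $\scrPhat$, so $(0,\scrq)$ is a reversible cyclic element. The supports $\{\sigma'\}$ and $\{\sigma\}$ lie in the hotness levels of $\sigma'$ and $\sigma$ respectively, so $(0,\scrq)$ is a Carnot element operating between those levels.

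\emph{(iv)$\Rightarrow$(iii).} Given two hotness levels, choose a representative state in each and apply (iv).

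\emph{(iii)$\Rightarrow$(i).} Let $T'(\cdot)$ be any Clausius-Duhem temperature scale. By Theorem \ref{thm:EqualHotness}, (i)$\Rightarrow$(ii), both $T$ and $T'$ are constant on each hotness level; write $T(h)$ and $T'(h)$ for these constant values. Let $(0,\mu'-\mu)$ be a Carnot element operating between levels $h'$ and $h$. Applying the Clausius-Duhem inequality, extended to $\scrPhat$, to both this element and its negative gives equality. Because the supports lie in the respective levels, this reads
$$\frac{\mu'(\Sigma)}{T'(h')} = \frac{\mu(\Sigma)}{T'(h)},$$
and the same identity holds with $T$ in place of $T'$. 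Since $\mu$ and $\mu'$ are nonzero positive measures, dividing the two identities gives $T'(h')/T(h') = T'(h)/T(h)$. Thus $T'/T$ is the same positive constant on every level, and hence on all of $\Sigma$.
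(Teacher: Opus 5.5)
Your proposal is correct and follows the structure the paper attributes to its proof. The only Hahn-Banach step is (i)$\Rightarrow$(ii), which you handle by separating $(0,\scrq)$ from the closed convex cone $\scrPhat$ (nonempty, since \eqref{eq:KPFinal} places $(0,0)$ in it) and then perturbing $1/T$ by a small multiple of the heating part of the separating functional, producing a Clausius-Duhem scale that (i) forces to be proportional to $T$; the remaining implications correctly reduce to the Clausius-Duhem equality on reversible cyclic elements of $\scrPhat$ together with the elementary half of Theorem \ref{thm:EqualHotness}.
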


Theorem \ref{thm:UniquenessTemp} is proved in \cite{feinberglavine2024entropy2}. (See also \cite{feinberg1983thermodynamics}.)  The proof draws heavily on the Hahn-Banach Theorem, in particular for proof of the implication $(i) \Rightarrow (ii)$. 

	Note that the implication $(iv) \Rightarrow (i)$ is easy to prove and not very deep. However, the converse $(i) \Rightarrow (iv)$, which  ultimately relies on the Hahn-Banach Theorem, tells us something far from obvious, highlighted in the following remark:
\smallskip
	
\begin{rem}[\emph{Uniqueness of a Clausius-Duhem temperature scale requires a large supply of reversible processes}]\label{rem:TempUniquenessAndEquil}
For essential \emph{uniqueness}  on $\Sigma$ of a Clausius-Duhem temperature scale,  it is  \emph{necessary} that  for every state in $\Sigma$ there is, at least in the limit, a reversible process that visits it. This will have important implications in Section \ref{sec:Conclusions} for our discussion of why some might insist that $\Sigma$ --- the common domain of the thermodynamic-temperature and specific-entropy functions --- be restricted to ``states of equilibrium."
\end{rem}
\smallskip

\section{Uniqueness of Entropy in a Kelvin-Planck Theory}	 \label{sec:EntUniqueness}

In this section we give equivalent necessary and sufficient conditions for essential uniqueness of the specific entropy function in a Kelvin-Planck theory having an essentially unique Clausius-Duhem temperature scale. The following theorem is  proved in \cite{feinberglavine2024entropy2}. (For statement of a similar theorem see \cite{feinberg1986foundations}.)
	 
\begin{theorem}[Entropy function uniqueness and the inexorable role of reversibility] \label{thm:EntUniqueness}
 For a Kelvin-Planck theory \theory, let $\To(\cdot)$ be an essentially unique  Clausius-Duhem temperature scale (in which case all of the equivalent conditions in Theorem \ref{thm:UniquenessTemp} obtain). Suppose also that $\etao(\cdot)$ is a Clausius-Duhem specific-entropy function corresponding to  $\To(\cdot)$.   The following are equivalent:
\begin{enumerate}[(i)]
\item If $\eta\,(\cdot)$ is any other Clausius-Duhem  specific-entropy function corresponding to $\To(\cdot)$, then $\eta\,(\cdot)$ differs from $\etao(\cdot)$ by at most a constant.\smallskip

\item \scrPhat\  contains the hyperplane consisting of all  $(\scrv,\scrw)$ in $\VSigma$ such that
\begin{equation} \label{eq:hyperplane}
 \int_{\Sigma}\etao\, d\scrv = \int_{\Sigma}\frac{d\scrw}{
\To}.
\end{equation}

\item For each choice of distinct $\sigma'$ and $\sigma$ in $\Sigma$, \scrPhat\   contains a reversible element of the form $(\delta_{\sigma'} - \delta_{\sigma}, \scrq)$ for some $\scrq$ in $\MSigma$. 
\end{enumerate}
\end{theorem}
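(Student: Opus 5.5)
I would prove the cycle (ii) $\Rightarrow$ (iii) $\Rightarrow$ (i) $\Rightarrow$ (ii). Throughout, write $H$ for the hyperplane in (ii) and $\ell$ for the continuous linear function $\ell(\scrv,\scrw) := -\int_\Sigma \etao\,d\scrv + \int_\Sigma d\scrw/\To$ on $\VSigma$. Then $H = \ker \ell$, and the Clausius-Duhem inequality for the pair $(\etao,\To)$ says exactly that $\ell \le 0$ on $\scrP$. By linearity and continuity, $\ell \le 0$ on all of $\scrPhat$.

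\emph{(ii) $\Rightarrow$ (iii).} Fix distinct $\sigma',\sigma$. Choose $\scrq := (\etao(\sigma')-\etao(\sigma))\,\To(\sigma)\,\delta_\sigma$. Then $(\delta_{\sigma'}-\delta_\sigma,\scrq)$ satisfies \eqref{eq:hyperplane}. Since $H$ is a linear subspace, its negative also lies in $H$. By (ii) both lie in $\scrPhat$, so $(\delta_{\sigma'}-\delta_\sigma,\scrq)$ is a reversible element of the required form.

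\emph{(iii) $\Rightarrow$ (i).} Let $\eta$ be another specific-entropy function for $\To$. Then $\ell_\eta(\scrv,\scrw) := -\int \eta\,d\scrv + \int d\scrw/\To$ is $\le 0$ on $\scrPhat$. Applied to a reversible element $\pm(\delta_{\sigma'}-\delta_\sigma,\scrq)$, this forces
\[
\eta(\sigma')-\eta(\sigma) = \int_\Sigma d\scrq/\To .
\]
The same argument with $\etao$ gives $\etao(\sigma')-\etao(\sigma) = \int_\Sigma d\scrq/\To$. Hence $\eta-\etao$ takes equal values at $\sigma'$ and $\sigma$, and since the pair was arbitrary, $\eta-\etao$ is constant.

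\emph{(i) $\Rightarrow$ (ii).} This is the Hahn-Banach step and the main obstacle. Suppose some $p_0\in H$ is not in $\scrPhat$. Corollary \ref{cor:HBCone}, applied to the closed convex cone $\scrPhat$ and the compact set $\{p_0\}$, gives a continuous linear $g$ with $g\le 0$ on $\scrPhat$ and $g(p_0)>0$. By the integral representation used in the proof of Theorem \ref{thm:CDPairExistence}, $g(\scrv,\scrw) = -\int \zeta\,d\scrv + \int \gamma\,d\scrw$ for continuous $\zeta,\gamma$.

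The plan is to perturb: set $f_\epsilon := \ell + \epsilon g$ with $\epsilon>0$ small. Then $f_\epsilon\le 0$ on $\scrPhat$, and its $\scrw$-coefficient $1/\To + \epsilon\gamma$ stays strictly positive because $\Sigma$ is compact and $1/\To$ is bounded below. So $f_\epsilon$ yields a Clausius-Duhem pair with temperature $(1/\To+\epsilon\gamma)^{-1}$ and entropy $\etao+\epsilon\zeta$.

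By essential uniqueness of temperature, $1/\To + \epsilon\gamma = \lambda/\To$ for some $\lambda>0$, so $\gamma = c/\To$ for a constant $c$. Then $\gamma$ is a positive multiple of $1/\To$ only after rescaling, and $(1/\lambda)f_\epsilon$ gives a specific-entropy function $(\etao+\epsilon\zeta)/\lambda$ corresponding to $\To$ itself. By (i), $(\etao+\epsilon\zeta)/\lambda = \etao + k$ for some constant $k$. Consequently $f_\epsilon = \lambda\ell$ on $\VSigma$, the constant contributing nothing since $\scrv(\Sigma)=0$.

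Evaluating at $p_0$ gives $f_\epsilon(p_0) = \ell(p_0)+\epsilon g(p_0) = \epsilon g(p_0) > 0$, whereas $\lambda\ell(p_0) = 0$. This contradiction shows $H\subset\scrPhat$. The delicate points to check are the positivity of the perturbed $\beta$ and that the constant $k$ vanishes on $\MSigmaZ$; both look routine.
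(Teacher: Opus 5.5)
Your proof is correct and follows the route the paper indicates; the paper itself defers the full proof to \cite{feinberglavine2024entropy2}, but says (iii)$\Rightarrow$(i) is the elementary direction and Hahn--Banach drives the converse, and your argument has exactly that shape: separate a point of the hyperplane from $\scrPhat$, then perturb $\ell$ by $\epsilon g$ to get a competing Clausius-Duhem pair, which temperature uniqueness and (i) force to be a multiple of $\ell$. The one blemish is the stray sentence about $\gamma$ being a positive multiple of $1/\To$ ``only after rescaling''; it is unnecessary, since all you use is $1/\To+\epsilon\gamma=\lambda/\To$.
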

\medskip

Hereafter, we will say that, in a thermodynamic theory \theory, two states $\sigma'$ and $\sigma$ in $\Sigma$ are \emph{reversibly related} if there is in $\scrPhat$ a reversible element of the form $(\delta_{\sigma'} - \delta_{\sigma}, \scrq)$ for some $\scrq$ in $\MSigma$. Viewed as a process, it would be a reversible one with heating measure $\scrq$ and change of condition $\deltam = \delta_{\sigma'} - \delta_{\sigma}$, which is suggestive of a change in which the body experiencing the process begins will all material in state $\sigma$ and terminates will all material in state $\sigma'$.

The implication $(iii) \Rightarrow (i)$ is easy to prove. The Hahn-Banach Theorem enters the proof of the far deeper converse, highlighted in the following remark.
\smallskip

\begin{rem}[\emph{Uniqueness and reversibility, again}] \label{rem:EntUniquenessReqReversibility}
 For a Kelvin-Planck theory $\theory$ having an essentially unique Clausius-Duhem temperature scale, entropy-density-function \emph{uniqueness} on $\Sigma$ \emph{requires} that \emph{all} states in $\Sigma$ be reversibly related. This requirement, like the implication $(i) \Rightarrow (iv)$ in Theorem \ref{thm:UniquenessTemp},  provides something of a motivation for those who might insist that  $\Sigma$ be restricted to equilibrium states. These implications will be discussed in Section \ref{sec:Conclusions}.
\end{rem}
\smallskip

\begin{rem}[\emph{Uniqueness of entropy and temperature requires that $\scrPhat$ be either a hyperplane or a half-space}] In Section \ref{sec:cartoon}, we discovered that, for our cartoon two-state substance, uniqueness of a Clausius temperature scale occurs in precisely two circumstances: $\scrChat$ is either a hyperplane or a half-space in 
$\R^2$. In a more general setting, the situation is similar. If, for a Kelvin-Planck theory \theory, $(\eta^{\circ}(\cdot), T^{\circ}(\cdot))$ is an essentially unique Clausius-Duhem pair, then $\scrPhat$ is either the hyperplane consisting of all $(\scrv,\scrw)$ in $\VSigma$ satisfying \eqref{eq:hyperplane} or else it is the half-space consisting of all $(\scrv,\scrw)$ in $\VSigma$ satisfying
\begin{equation} \label{eq:halfspace}
 \int_{\Sigma}\etao\, d\scrv \geq \int_{\Sigma}\frac{d\scrw}{
\To}.
\end{equation}
In particular, if $\scrPhat$ contains \emph{even one} irreversible element, then $\scrPhat$ must contain \emph{all} $(\scrv,\scrw)$ in $\VSigma$ satisfying \eqref{eq:halfspace}. See \cite{feinberglavine2024entropy2}.
\end{rem}

\section{Concluding Remarks: The Legacy of the 19th Century Pioneers in Light of 20th Century Mathematics} \label{sec:Conclusions}

As we asserted at the outset, the 19\textsuperscript{th} century pioneers of classical thermodynamics were brilliant in what they gave us, but they did not have available to them the 20\textsuperscript{th}  century mathematical tools that make possible a deeper analysis or a broadening of what they had done. In this 21\textsuperscript{st} century review, we hope that Theorems \ref{thm:CDPairExistence}, \ref{thm:UniquenessTemp}, \ref{thm:EntUniqueness}, and the commentary below will put into a modern perspective the miraculous path that the great pioneers traveled.
%
%

\begin{subsection}{Conflation of Existence and Uniqueness, and of  Necessary and Sufficient Conditions} \label{subsec:Conflation} In the hands of the pioneers,  the argument for \emph{existence} of 
 a Clausius-Duhem entropy-temperature pair was a constructive one. The construction relied on the presence of a very large supply of reversible processes (in particular Carnot cycles), wherein \emph{each} state in the common domain of the entropy and temperature state-functions is visited by at least one such reversible process. For the pioneers, \emph{uniqueness} of such a Clausius-Duhem pair, \emph{so constructed}, emerged almost simultaneously from the same argument.\footnote{Fermi \cite{fermithermodynamics} gives a good summary.} It is important to note that reversibility of a process was deemed, intuitively, to derive from the extreme slowness of its execution, so that the material states visited by reversible processes were regarded to be ones of equilibrium.
 
 \begin{rem} Although the association of a  reversible process with a slow passage through a trajectory of equilibria is compelling, it is reasonable to draw a distinction between two ideas: an equilibrium state and a state that can manifest itself in a reversible process. In Appendix C of \cite{feinberglavine2024entropy2} we gave a hypothetical reacting-mixture example  indicating that, in the sense of \S\ref{subsec:Reversible&Cyc},  reversible elements of \scrPhat\  can derive from consideration of fast processes, involving mixture states far from chemical equilibrium.
 \end{rem}

\begin{subsubsection}{Summary: Existence of a Clausius-Duhem pair} \label{subsubsec:SummaryExist} Because, in the classical arguments, the states to which the temperature and entropy functions assign values were deemed to be equilibria, the assertion is often made (as in the Gemini response in $\S$\ref{subsec:Motivation-Questions}) that it is \emph{necessary} that the domains of those functions be restricted to equilibrium states. From the standpoint of cautious scholarship, there is indeed merit to this assertion, \emph{if the assertion is taken to mean that the equilibrium-state requirement is necessary for the 19\textsuperscript{\ th} century argument to proceed in the way that it did.}

However, in our terms, \emph{this is not the same as an assertion that, for existence of a Clausius-Duhem pair in a Kelvin-Planck theory $\theory$, it is necessary that all members of $\Sigma$ be states of equilibrium (however equilibrium might be defined).} Such an assertion would be false. Theorem \ref{thm:CDPairExistence} tells us that, for \emph{existence} of a Clausius-Duhem entropy-temperature pair in a thermodynamic theory, the Kelvin-Planck Second Law,  as encoded in \eqref{eq:KPFinal}, is \emph{by itself sufficient}. As indicated in the proof of $(i) \Rightarrow (ii)$ in Theorem \ref{thm:CDPairExistence},  this is an almost immediate consequence of the Hahn-Banach Theorem. Additional suppositions about equilibrium or reversibility of processes are \emph{not necessary}.\footnote{It should be kept in mind, however, that $\Sigma$ is presumed to be a closed and bounded subset of $\mathbb{R}^N$, for $N$ finite. See also $\S$\ref{subsec:WhereThingsStand}.}

\end{subsubsection}

\begin{subsubsection}{Summary: Uniqueness of a Clausius-Duhem pair} \label{subsubsec:SummaryUnique}
Essential \emph{uniqueness} of a Clausius-Duhem pair is  very different. Theorems \ref{thm:UniquenessTemp} and \ref{thm:EntUniqueness} tell us that, for a Kelvin-Planck theory $\theory$ to have a Clausius-Duhem entropy-temperature pair that is \emph{essentially unique on the entirety of} $\Sigma$, it is \emph{necessary} that \emph{every} state in $\Sigma$ be visited by a reversible element in $\scrPhat$. This too is a consequence of the Hahn-Banach Theorem.\footnote{Lack of uniqueness of a Clausius-Duhem pair without a full supply of reversible processes was already discussed with respect to temperature in \cite{feinberg1983thermodynamics} and with respect to both temperature and entropy in \cite{feinberg1986foundations}. Later, in a very different setting, Lieb and Yngvason \cite{lieb2013entropy} also discussed lack of uniqueness of entropy when nonequilibrium states are taken into account.} 

	Nevertheless, even when a Kelvin-Planck theory \theory\, admits a variety of essentially different Clausius-Duhem pairs, the component temperature and entropy functions \emph{for all such pairs} will often have properties expected in applications. For example, it is shown in \cite{feinberglavine2024entropy2} that \emph{every} Clausius-Duhem temperature scale for $\theory$ will reflect  a natural ``hotter than" relation in the states of $\Sigma$, a relation that makes no mention of temperature but is instead defined solely in terms of the processes the theory contains.

	Note that Theorems \ref{thm:UniquenessTemp} and \ref{thm:EntUniqueness} also give \emph{sufficient} conditions for essential uniqueness of a Clausius-Duhem temperature scale and, when such conditions are satisfied, essential uniqueness of a corresponding entropy-temperature Clausius-Duhem pair.	
	
\medskip
\begin{rem}[\emph{Uniqueness of a Clausius-Duhem entropy-temperature pair on a state space subdomain}]\label{rem:CDPairUniqueOnSubdomain}

For a Kelvin-Planck theory $\theory$ it might happen that all Clausius-Duhem pairs are essentially the same on a subdomain $\Sigma_0$ within $\Sigma$. In that case, it is only states within $\Sigma_0$ that need be visited by reversible members of $\scrPhat$.  See  \cite{feinberg1983thermodynamics}  and \cite{feinberglavine2024entropy2}.  This allows for a Kelvin-Planck theory in which $\Sigma$ is a mix of states that are visited by reversible members of $\scrPhat$ and those that are not.	
\end{rem} 
\end{subsubsection}
\end{subsection}

\begin{subsection}{Use of the Clausius-Duhem Inequality in Modern Continuum Thermomechanics}

Here we return to Section \ref{subsec:Motivation-Questions}, which posed two questions.  Section \ref{subsec:Conflation} provides answers to those questions, at least for a thermodynamic theory in which the state space can be identified with a closed and bounded subset of $\mathbb{R}^N$, with $N$ finite. 

However, an overarching question in Section \ref{subsec:Motivation-Questions} was about the free-wheeling use of the Clausius-Duhem inequality in modern continuum thermomechanics when rapid deformations and great heat fluxes are present.\footnote{The general Clausius-Duhem inequality stated here as \eqref{eq:CDIneqFormal} is stated in a second version as eq.(5.9) in \cite{feinberglavine2024entropy1}. That version, stated explicitly in terms of bodies and material points, makes more tangible contact with the continuum thermomechanics literature.}   Two examples cited were an article  by Coleman and Noll \cite{coleman1963thermodynamics}, largely about viscous fluids, and an article by Bowen \cite{bowen1968thermochemistry}, in which the Coleman-Noll methodology was applied to mixtures with chemical reactions. In both instances, the relevant state spaces can be reasonably associated with closed and bounded subsets of $\mathbb{R}^N$, with $N$ finite.

In \cite{coleman1963thermodynamics} and \cite{bowen1968thermochemistry} the \emph{existence} of a Clausius-Duhem entropy-temperature function pair is taken for granted at the outset (along with the differentiability almost always presumed tacitly in the thermodynamics literature\footnote{The central theorems here are focused on existence and properties of \emph{continuous} entropy and temperature functions of state. However, the same theorems obtain when \emph{continuous} is replaced by \emph{differentiable} (of various degrees), provided that the idea of an open set in $\VSigma$ is modified, along lines discussed in Remark 10.2 of \cite{feinberg1983thermodynamics}.}).  For the authors, the Second Law then amounts to the following assertion: For no application of external field forces or external radiative heat supplies to a body can any process experienced by the body, consistent with the remaining laws of physics, violate the Clausius-Duhem inequality.  

In both \cite{coleman1963thermodynamics} and \cite{bowen1968thermochemistry}, application of this principle yields a family of identities and inequalities that hold among various \textit{constitutive functions} characterizing the material under study. These are functions of the local state that give values for local physical attributes such as the specific internal energy, the conductive heat flux, the stress tensor, and  chemical reaction rates. 

In the spirit of Section \ref{subsec:Motivation-Questions} we can now ask again:  For the purposes of such applications, does the presumed \emph{existence} of specific-entropy and temperature functions of the local (perhaps nonequilibrium) state, suited to the Clausius-Duhem inequality, follow from a more fundamental statement of the Second Law?

We believe that the answer is \emph{yes}. Theorem \ref{thm:CDPairExistence} tells us that existence of a Clausius-Duhem pair is not only ensured by the Kelvin-Planck Second Law, as embodied in \eqref{eq:KPFinal}, but also equivalent to it.

	However,  Theorems \ref{thm:UniquenessTemp} and \ref{thm:EntUniqueness} tell us with certainty that, for a Kelvin-Planck theory $\theory$, not only \emph{might not} a Clausius-Duhem  pair be essentially unique on the entire state space $\Sigma$, such uniqueness \emph{ cannot} obtain unless \emph{every} state in $\Sigma$ is visited by a reversible element of $\scrPhat$. 
	
	In this light, \emph{the question of interest in nonequilibrium contexts shifts from one about existence of a Clausius-Duhem entropy-temperature pair to one about how important its  uniqueness is in a particular application}. That importance will certainly vary from one application to another and must ultimately be judged against the aims of the applier. It cannot be decided here.
	
\begin{rem}[Considerations about uniqueness] In \cite{coleman1963thermodynamics} and \cite{bowen1968thermochemistry}, the authors begin with the presumed existence of a \emph{fixed} Clausius-Duhem pair, which from our standpoint might be one among many. Because in those articles all identities derive  from existence of a fixed but \emph{arbitrary} Clausius-Duhem pair, those same identities (which sometimes involve the entropy and temperature explicitly) would presumably hold for \emph{every} choice of a Clausius-Duhem pair. However, for some of the results stated, such as the positivity of the viscosity and thermal conductivity in \cite{coleman1963thermodynamics}, entropy and temperature play no overt role.
\end{rem}

\begin{rem}[Materials with memory] Our focus in this article has been on thermodynamic theories in which the state space can be identified with a closed and bounded subset of $\mathbb{R}^N$, with $N$ finite. There are other instances, however, in which models of material behavior transcend the range of applicability of theorems stated here. We have in mind  theories of materials with memory, such as \cite{coleman1964thermodynamics}, in which the state of a material element is associated with a list of \emph{histories} of material attributes, such as the deformation gradient and the temperature. In this case, the state space would reside in an infinite-dimensional vector space, with members consisting of functions of time.  
\end{rem}

\end{subsection}

\begin{subsection}{Where Do Things Stand?} \label{subsec:WhereThingsStand}

In Section \ref{subsec:Motivation-Questions}, a  2026 Google Gemini response was given for the  query: \textit{``In classical thermodynamics, is entropy defined only on equilibrium states? If so, why?"} That answer, with some elaboration, was \emph{yes}. We believe that such a consensus answer, and the elaboration, would be the same were it given by a hypothetical artificial intelligence machine of the 19\textsuperscript{th} century. We also think that the two responses would  be identical if, in the query, ``entropy" were replaced by ``thermodynamic temperature." It seems, then, that consensus 21\textsuperscript{st} century thinking about classical thermodynamics remains largely unchanged since the time of the 19\textsuperscript{th} century pioneers.

Especially with regard to \emph{existence} of local entropy and thermodynamic temperature functions of state suited to the Clausius-Duhem inequality for rapid processes, the Hahn-Banach existence proofs described in this review were known 40 years ago \cite{feinberg1983thermodynamics,feinberg1986foundations}. The fact that the consensus view of classical thermodynamics nevertheless remains largely that of the 19\textsuperscript{th} century can be attributed to several factors. Some certainly involve the dramatic evolution of 20\textsuperscript{th} century physics, in particular the drift in physics education and research away from classical thermodynamics and toward statistical thermodynamics.

	Perhaps the most important factor, however, is the great distance between the mathematics education normally provided to most practitioners of classical thermodynamics (e.g., engineers, chemists, biologists) and the mathematics education required for understanding of the Hahn-Banach \emph{arguments} underlying the theorems reported here (as distinct from their \emph{results}). It is our hope that this review will help narrow that gap.

\end{subsection}

\bibliographystyle{RS} 

\bibliography{MonoLibrary-Thermo}

\end{document}